\documentclass[aps,pra,reprint,twocolumn]{revtex4-2}

\usepackage[T1]{fontenc}
\usepackage[utf8]{inputenc}
\usepackage{amsmath,amssymb,amsfonts}
\usepackage{amsthm}
\usepackage{bm}
\usepackage{graphicx}
\usepackage{mathtools}
\usepackage{bbm}

\newcommand{\Tr}{\operatorname{Tr}}
\newcommand{\spec}{\operatorname{spec}}

\newcommand{\ket}[1]{|#1\rangle}
\newcommand{\bra}[1]{\langle #1|}
\newcommand{\proj}[1]{|#1\rangle\langle #1|}
\newcommand{\LHS}{\mathrm{LHS}}

\newtheorem{Theorem}{Theorem}
\newtheorem{Proposition}[Theorem]{Proposition}

\begin{document}

\title{Analytic construction of dimension-dependent steering witnesses from generalized equiangular measurements}

\author{Adam Rutkowski}
\email{adam.rutkowski@ug.edu.pl}
\affiliation{Institute of Theoretical Physics and Astrophysics,
Faculty of Mathematics, Physics and Informatics,
University of Gda\'{n}sk, 80-952 Gda\'{n}sk, Poland}

\author{Katarzyna Siudzi\'{n}ska}
\affiliation{Institute of Physics,
Faculty of Physics, Astronomy and Informatics,
Nicolaus Copernicus University in Toru\'{n},
ul.~Grudzi\c{a}dzka~5, 87-100 Toru\'{n}, Poland}

\begin{abstract}
We construct a dimension-dependent family of generalized equiangular
measurements that gives explicit steering witnesses in every prime dimension
\(d\geq5\). The construction consists of \(d-1\) mutually unbiased
measurements and an additional normalized family of \(2d-1\) Hermitian
operators depending on a parameter \(\tau\). Positivity of the latter family
is the main technical problem. We show that the noncentral operators reduce to
three-dimensional invariant blocks, while the remaining central operator is
exactly solvable in a two-dimensional subspace. Two unitarily equivalent
\(3\times3\) matrix representations describe the same active spectrum, but
yield different Gershgorin enclosures because Gershgorin bounds are basis
dependent. Combining the stronger endpoint from each representation gives a
fully analytic asymmetric interval for \(\tau\) that guarantees a valid POVM.
Combining this POVM with the \(d-1\) MUMs yields an explicit GEAM. Finally,
using a maximally entangled realization and the local-hidden-state bound
previously derived for generalized equiangular measurements, we obtain an
explicit steering inequality and a sufficient criterion guaranteeing quantum
violation in every prime dimension \(d\geq5\).
\end{abstract}

\keywords{Quantum steering, Steering inequalities, POVMs,
Generalized equiangular measurements, Gershgorin theorem}

\maketitle

\section{Introduction}
\label{sec:introduction}

Quantum steering is an asymmetric form of nonclassical correlation lying
between entanglement and Bell nonlocality. It describes the ability of one
party to remotely prepare ensembles of quantum states for another party by
means of local measurements. Since its operational formulation
\cite{Wiseman2007}, steering has been studied both as a fundamental feature of
quantum theory and as a resource for one-sided device-independent quantum
information protocols; see Ref.~\cite{Uola2020} for a review. The ability to
demonstrate steering is also closely connected with measurement incompatibility
and joint measurability \cite{Heinosaari2016,Quintino2016}.

A standard way to certify steering is through a steering inequality. Its
local-hidden-state (LHS) bound is obeyed by every assemblage admitting a
classical hidden-state description, while a quantum violation excludes all
such models. In contrast to Bell inequalities, whose general structure is reviewed in
Ref.~\cite{Brunner2014}, steering inequalities can exhibit violations that
grow without bound with the local Hilbert-space dimension
\cite{Cavalcanti2015}.

In Ref.~\cite{RutkowskiSiudzinska2025}, steering functionals generated by
generalized equiangular measurements (GEAMs) were studied at an abstract
level. An analytic upper bound on their LHS value was derived solely from the
frame parameters, and high-dimensional regimes with unbounded violation were
identified. Related recent approaches use generalized local measurements in typical
steering scenarios \cite{SICMUB_preprint2}, correlation matrices associated
with equiangular tight frames \cite{Certify_EPR}, and further measurement-based
characterizations of steering \cite{Certify_EPR2}. Entropic uncertainty
relations for generalized symmetric measurements provide another recent route
\cite{steering_entropic}. The latter work explicitly places the GEAM steering
functional of Ref.~\cite{RutkowskiSiudzinska2025} among the recent
developments in this direction.

To turn these frame-level steering bounds into explicit dimension-dependent
witnesses, one must realize the corresponding GEAMs by genuine positive
measurement effects. The remaining difficulty is therefore constructive. For
a concrete operator family, normalization can often be checked algebraically,
whereas positivity requires uniform control of the extremal eigenvalues of
dimension-dependent sums of noncommuting operators. In the measurement family
considered here, the nontrivial part of this problem reduces to cubic
characteristic polynomials. Although their roots can be written in closed
form, those expressions are not well suited for a transparent comparison
valid in every dimension.

The purpose of this work is to close this gap analytically. We first construct
a normalized Hermitian family. For the noncentral operators, all nontrivial
spectral information is confined to three-dimensional invariant subspaces,
whereas the remaining central operator is exactly solvable on a
two-dimensional active subspace. The noncentral active spectrum is then
represented in two different orthonormal bases. This redundancy is useful:
Gershgorin estimates are basis dependent, so the two representations provide
complementary bounds on the smallest and largest eigenvalues. By combining
the stronger endpoint from each basis, we obtain an explicit asymmetric
interval for the parameter \(\tau\) that guarantees positivity of all effects
for every prime \(d\geq5\).

Once positivity is established, the resulting POVM is combined with
\(d-1\) mutually unbiased measurements to form a GEAM whose trace and overlap
parameters are obtained explicitly. Finally, we use the maximally entangled
state to construct a concrete quantum
assemblage. Together with the LHS bound from
Ref.~\cite{RutkowskiSiudzinska2025}, this yields an explicit steering witness
for every prime dimension \(d\geq5\). The previously established asymptotic
analysis is not repeated; the new contribution is the fully analytic
realization of the measurement family and its dimension-by-dimension steering
certification.

\section{Steering and local-hidden-state models}
\label{sec:steering}

We consider the standard bipartite steering scenario in which Alice and Bob
share a quantum state
\begin{equation}
\rho_{AB}\in\mathcal{B}(\mathcal{H}_A\otimes\mathcal{H}_B).
\label{eq:bipartite-state}
\end{equation}
Alice performs a measurement labeled by \(x\), with outcomes \(a\), described
by a POVM
\begin{equation}
\mathcal{E}_x=\{E_{a|x}\}_a,
\qquad
E_{a|x}\geq0,
\qquad
\sum_aE_{a|x}=\mathbb{I}_A.
\label{eq:alice-povm}
\end{equation}
Bob's conditional, generally unnormalized, states are
\begin{equation}
\sigma_{a|x}
=
\Tr_A\!\left[
(E_{a|x}\otimes\mathbb{I}_B)\rho_{AB}
\right].
\label{eq:assemblage}
\end{equation}
The family
\begin{equation}
\boldsymbol{\sigma}
=
\{\sigma_{a|x}\}_{a,x}
\label{eq:assemblage-family}
\end{equation}
is called an assemblage. It satisfies
\begin{equation}
\sum_a\sigma_{a|x}
=
\rho_B,
\qquad
\rho_B=\Tr_A\rho_{AB},
\qquad
\text{for every }x.
\label{eq:no-signalling}
\end{equation}
Equation~\eqref{eq:no-signalling} means that Bob's reduced state does not
depend on Alice's choice of measurement when Alice's outcome is ignored.

An assemblage admits a local hidden state model if
\begin{equation}
\sigma_{a|x}
=
\sum_\lambda
p(\lambda)\,
p(a|x,\lambda)\,
\rho_\lambda,
\label{eq:LHS-model}
\end{equation}
where \(p(\lambda)\) is a probability distribution,
\(p(a|x,\lambda)\) are conditional probabilities, and
\(\rho_\lambda\) are states on \(\mathcal{H}_B\).
Assemblages of the form \eqref{eq:LHS-model} are unsteerable. An assemblage
which does not admit such a representation is steerable
\cite{Wiseman2007,Cavalcanti2009}.

Let
\begin{equation}
F=\{F_{a|x}\}_{a,x}
\label{eq:steering-functional-family}
\end{equation}
be a family of Hermitian operators acting on \(\mathcal{H}_B\). The
corresponding steering functional is
\begin{equation}
\mathcal{S}_F(\boldsymbol{\sigma})
=
\sum_{a,x}
\Tr\!\left(
F_{a|x}\sigma_{a|x}
\right).
\label{eq:steering-functional}
\end{equation}
Its local-hidden-state value is
\begin{equation}
\beta_{\LHS}(F)
=
\sup_{\boldsymbol{\sigma}\in\LHS}
\mathcal{S}_F(\boldsymbol{\sigma}).
\label{eq:LHS-value}
\end{equation}
Therefore every unsteerable assemblage satisfies
\begin{equation}
\mathcal{S}_F(\boldsymbol{\sigma})
\leq
\beta_{\LHS}(F).
\label{eq:steering-inequality}
\end{equation}
A violation of \eqref{eq:steering-inequality} certifies steering, because
the observed assemblage then lies outside the convex set of assemblages
admitting the decomposition \eqref{eq:LHS-model}.

In the remainder of the paper, we specialize to equal local dimensions,
\(d_A=d_B=d\), and to prime dimensions \(d\geq5\). Primeness provides the
explicit complete MUB construction used below, whereas the restriction
\(d\geq5\) enters the uniform spectral estimates required in the positivity
proof.

\section{Measurement scheme}
\label{sec:measurement-scheme}

The construction consists of \(d-1\) mutually unbiased measurements and an
additional one-parameter family of \(2d-1\) Hermitian operators. Symmetric
measurement structures of this type are useful in a range of operational and
geometric applications \cite{SIC-MUB_general,GEAM_coherence}. The latter
family is normalized for every real \(\tau\), but positivity is not
automatic. We first introduce the operators, then reduce their spectral
analysis to low-dimensional invariant blocks, and finally derive an
analytic interval for \(\tau\) that guarantees a valid POVM.

\subsection*{\textit{Symmetric building blocks}}

Let
\begin{equation}
\mathcal{H}_A\simeq\mathbb{C}^d,
\label{eq:Hilbert-space-A}
\end{equation}
where \(d\geq5\) is prime. Two orthonormal bases
\begin{equation}
\mathcal{B}_\alpha
=
\{\ket{\psi_{\alpha,k}}:k=0,\ldots,d-1\}
\label{eq:MUB-basis}
\end{equation}
and \(\mathcal{B}_\beta\) are mutually unbiased if
\begin{equation}
\left|
\langle\psi_{\alpha,k}|\psi_{\beta,\ell}\rangle
\right|^2
=
\frac{1}{d},
\qquad
\alpha\neq\beta.
\label{eq:MUB-condition}
\end{equation}
For prime \(d\), a complete set of \(d+1\) mutually unbiased bases exists
\cite{Ivanovic,Wiesniak}. Each basis defines a projective
measurement
\begin{equation}
\mathcal{P}_\alpha
=
\{P_{\alpha,k}=\proj{\psi_{\alpha,k}}:
k=0,\ldots,d-1\},
\label{eq:MUB-projective-measurement}
\end{equation}
with overlap relations
\begin{equation}
\Tr(P_{\alpha,k}P_{\beta,\ell})
=
\delta_{\alpha\beta}\delta_{k\ell}
+
\frac{1-\delta_{\alpha\beta}}{d}.
\label{eq:MUB-overlaps}
\end{equation}

Following the construction of mutually unbiased measurements in
Ref.~\cite{Kalev2014}, we use \(d-1\) of these projective measurements to define
mutually unbiased measurements
\begin{equation}
\mathcal{E}_\alpha
=
\{E_{\alpha,k}:k=0,\ldots,d-1\},
\qquad
\alpha=1,\ldots,d-1,
\label{eq:MUM-family}
\end{equation}
whose effects are
\begin{equation}
E_{\alpha,k}
=
\frac{1-t}{d}\mathbb{I}_d+tP_{\alpha,k},
\qquad
-\frac{1}{d-1}\leq t\leq1.
\label{eq:MUM-effects}
\end{equation}
Their purity is
\begin{equation}
\kappa
=
\Tr(E_{\alpha,k}^2)
=
\frac{(d-1)t^2+1}{d}.
\label{eq:MUM-purity}
\end{equation}
The detailed MUB--MUM construction is recalled in
Appendix~\ref{app:MUB-MUM}. The two unused MUB sectors are reserved for the
construction of the additional \(2d-1\)-outcome family below. For
\(t\neq0\), one has \(\kappa>1/d\)
and the measurements are nontrivial MUMs in the standard sense. The value
\(t=0\) gives the uniform boundary case \(E_{\alpha,k}=\mathbb{I}_d/d\),
which we allow as a limiting POVM when stating parameter ranges.

Let
\begin{equation}
P_k=\proj{k},
\qquad
k=0,\ldots,d-1,
\label{eq:computational-projectors}
\end{equation}
and define
\begin{equation}
a
=
\frac{1}{\sqrt{d}(\sqrt{d}+1)}.
\label{eq:a-definition}
\end{equation}
The traceless Hermitian operators
\begin{equation}
G_k
=
a\left(
\mathbb{I}_d+\sqrt{d}P_0
\right)-P_k,
\qquad
k=1,\ldots,d-1,
\label{eq:Gk-definition}
\end{equation}
are elements of the Hermitian operator basis associated with the complete
set of MUBs.

We also introduce the discrete Fourier transform
\begin{equation}
U_d
=
\frac{1}{\sqrt{d}}
\sum_{m,n=0}^{d-1}
\omega^{mn}\ket{m}\bra{n},
\qquad
\omega=e^{2\pi i/d},
\label{eq:Fourier-transform}
\end{equation}
and
\begin{equation}
G
=
\sum_{k=1}^{d-1}
\left(
G_k+U_dG_kU_d^\dagger
\right).
\label{eq:G-definition}
\end{equation}
A direct calculation gives
\begin{equation}
G
=
\sqrt{d}
\left(
P_0+P_+-\frac{2}{d}\mathbb{I}_d
\right),
\qquad
P_+=\proj{+},
\label{eq:G-explicit}
\end{equation}
where
\begin{equation}
\ket{+}
=
\frac{1}{\sqrt{d}}
\sum_{m=0}^{d-1}\ket{m}.
\label{eq:plus-vector}
\end{equation}
Moreover,
\begin{equation}
U_dGU_d^\dagger=G.
\label{eq:G-invariance}
\end{equation}

For convenience, let
\begin{equation}
r=\sqrt{2d-1},
\qquad
\eta=r(1+r).
\label{eq:r-eta}
\end{equation}
We define
\begin{align}
H_k
&=
G-\eta G_k,
&&
k=1,\ldots,d-1,
\label{eq:Hk-definition}\\
H_d
&=
(1+r)G,
\label{eq:Hd-definition}\\
H_{d+k}
&=
G-\eta U_dG_kU_d^\dagger,
&&
k=1,\ldots,d-1.
\label{eq:Hdk-definition}
\end{align}
Equation~\eqref{eq:G-invariance} implies
\begin{equation}
H_{d+k}
=
U_dH_kU_d^\dagger,
\qquad
k=1,\ldots,d-1,
\label{eq:H-unitary-equivalence}
\end{equation}
and hence
\begin{equation}
\spec(H_{d+k})
=
\spec(H_k).
\label{eq:H-spectral-equivalence}
\end{equation}

Following the symmetric-POVM construction of
Ref.~\cite{SIC-MUB}, we now introduce the Hermitian family
\begin{equation}
E_j(\tau)
=
\frac{1}{2d-1}\mathbb{I}_d+\tau H_j,
\qquad
j=1,\ldots,2d-1.
\label{eq:E-tau-definition}
\end{equation}
Every \(H_j\) is traceless, so
\begin{equation}
\Tr E_j(\tau)
=
\frac{d}{2d-1}.
\label{eq:E-trace}
\end{equation}
Furthermore,
\begin{align}
\sum_{j=1}^{2d-1}H_j
&=
\left[
2(d-1)+(1+r)-r(1+r)
\right]G
\nonumber\\
&=
\left[
2d-1-r^2
\right]G
=
0,
\label{eq:H-normalization-sum}
\end{align}
because \(r^2=2d-1\). Therefore
\begin{equation}
\sum_{j=1}^{2d-1}E_j(\tau)
=
\mathbb{I}_d
\label{eq:E-normalization}
\end{equation}
for every real \(\tau\). At this point,
\begin{equation}
\mathcal{E}(\tau)
=
\{E_j(\tau):j=1,\ldots,2d-1\}
\label{eq:candidate-POVM}
\end{equation}
is only a normalized Hermitian family. It becomes a POVM only after
positivity has been established.


\subsection*{\textit{Spectral structure and reduction to three dimensions}}

The explicit form of \(G\) in Eq.~\eqref{eq:G-explicit} allows us to
rewrite the nontrivial operators as the identity plus three rank-one terms.
Introduce
\begin{align}
 c&=-\frac{2}{\sqrt{d}}-\eta a,
 &
 \beta&=\sqrt{d}-\eta a\sqrt{d},
 \nonumber\\
 \gamma&=\sqrt{d},
 &
 \delta&=\eta.
\label{eq:auxiliary-parameters}
\end{align}
Then, for \(k=1,\ldots,d-1\),
\begin{equation}
 H_k
 =
 c\mathbb{I}_d+\beta P_0+\gamma P_++\delta P_k.
\label{eq:Hk-rank-one-decomposition}
\end{equation}
Let
\begin{equation}
 \ket{\phi_k}=U_d\ket{k},
 \qquad
 Q_k=\proj{\phi_k}.
\label{eq:Fourier-projector}
\end{equation}
Conjugating Eq.~\eqref{eq:Hk-rank-one-decomposition} by \(U_d\) gives
\begin{equation}
 H_{d+k}
 =
 c\mathbb{I}_d+\beta P_++\gamma P_0+\delta Q_k.
\label{eq:Hdk-rank-one-decomposition}
\end{equation}
Thus, in either representation, only three rank-one directions act
nontrivially in addition to the scalar part \(c\mathbb{I}_d\).

For the first representation, define
\begin{equation}
 \ket{w_k}
 =
 \frac{1}{\sqrt{d-2}}
 \sum_{\substack{m=1\\m\neq k}}^{d-1}\ket{m}
\label{eq:wk-definition}
\end{equation}
and
\begin{equation}
 \mathcal{K}_k
 =
 \operatorname{span}\{\ket{0},\ket{k},\ket{w_k}\}.
\label{eq:Kk-definition}
\end{equation}
For the Fourier-conjugated representation, define
\begin{equation}
 \ket{v}
 =
 \frac{\sqrt{d}\ket{0}-\ket{+}}{\sqrt{d-1}},
\label{eq:v-definition}
\end{equation}
\begin{equation}
 \ket{z_k}
 =
 \sqrt{\frac{d-1}{d-2}}
 \left(
 \ket{\phi_k}-\frac{1}{\sqrt{d-1}}\ket{v}
 \right),
\label{eq:zk-definition}
\end{equation}
and
\begin{equation}
 \mathcal{L}_k
 =
 \operatorname{span}\{\ket{+},\ket{v},\ket{z_k}\}.
\label{eq:Lk-definition}
\end{equation}

\begin{Proposition}[Invariant-block decomposition]
\label{prop:spectral-reduction}
For every \(k=1,\ldots,d-1\), the subspaces \(\mathcal{K}_k\) and
\(\mathcal{L}_k\) are invariant under \(H_k\) and \(H_{d+k}\),
respectively. Moreover,
\begin{equation}
 H_k
 \simeq
 B^{(1)}\oplus c\mathbb{I}_{d-3},
\label{eq:Hk-direct-sum}
\end{equation}
where, in the ordered basis
\(\{\ket{0},\ket{k},\ket{w_k}\}\),
\begin{equation}
 B^{(1)}
 =
 \begin{pmatrix}
 c+\beta+\frac{\gamma}{d}
 & \frac{\gamma}{d}
 & \frac{\gamma\sqrt{d-2}}{d}
 \\
 \frac{\gamma}{d}
 & c+\delta+\frac{\gamma}{d}
 & \frac{\gamma\sqrt{d-2}}{d}
 \\
 \frac{\gamma\sqrt{d-2}}{d}
 & \frac{\gamma\sqrt{d-2}}{d}
 & c+\frac{\gamma(d-2)}{d}
 \end{pmatrix}.
\label{eq:first-3x3-block}
\end{equation}
Similarly,
\begin{equation}
 H_{d+k}
 \simeq
 B^{(2)}\oplus c\mathbb{I}_{d-3},
\label{eq:Hdk-direct-sum}
\end{equation}
where, in the ordered basis
\(\{\ket{+},\ket{v},\ket{z_k}\}\),
\begin{equation}
 B^{(2)}
 =
 \begin{pmatrix}
 c+\beta+\frac{\gamma}{d}
 & \frac{\gamma\sqrt{d-1}}{d}
 & 0
 \\
 \frac{\gamma\sqrt{d-1}}{d}
 & c+\frac{\gamma(d-1)}{d}+\frac{\delta}{d-1}
 & \frac{\delta\sqrt{d-2}}{d-1}
 \\
 0
 & \frac{\delta\sqrt{d-2}}{d-1}
 & c+\frac{\delta(d-2)}{d-1}
 \end{pmatrix}.
\label{eq:second-3x3-block}
\end{equation}
The two active blocks have the same spectrum:
\begin{equation}
 \spec\bigl(B^{(1)}\bigr)
 =
 \spec\bigl(B^{(2)}\bigr).
\label{eq:block-isospectrality}
\end{equation}
\end{Proposition}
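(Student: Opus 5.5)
The plan is to work directly from the rank-one decompositions \eqref{eq:Hk-rank-one-decomposition} and \eqref{eq:Hdk-rank-one-decomposition}. Since \(H_k-c\mathbb{I}_d\) is a combination of the three projectors \(P_0,P_+,P_k\), its range lies in \(\operatorname{span}\{\ket{0},\ket{+},\ket{k}\}\). This span coincides with \(\mathcal{K}_k\), because
\[
\sqrt{d}\ket{+}=\ket{0}+\ket{k}+\sqrt{d-2}\,\ket{w_k},
\]
and \(\{\ket{0},\ket{k},\ket{w_k}\}\) is orthonormal: the vectors have disjoint supports and \(\ket{w_k}\) is normalized. A Hermitian operator of the form \(c\mathbb{I}+R\) with \(\operatorname{ran}R\subseteq\mathcal{K}\) leaves both \(\mathcal{K}\) and \(\mathcal{K}^\perp\) invariant, and it acts as \(c\) on \(\mathcal{K}^\perp\). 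Indeed, \(R\) is Hermitian, so \(\ker R=(\operatorname{ran}R)^\perp\supseteq\mathcal{K}^\perp\). This gives the direct-sum form \eqref{eq:Hk-direct-sum}, with \(\dim\mathcal{K}_k^\perp=d-3\).

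The matrix \(B^{(1)}\) then follows by computing matrix elements. We have \(\langle 0|+\rangle=\langle k|+\rangle=1/\sqrt{d}\) and \(\langle w_k|+\rangle=\sqrt{(d-2)/d}\), so \(P_+\) restricted to \(\mathcal{K}_k\) is the rank-one matrix \(\frac1d u u^{T}\) with \(u=(1,1,\sqrt{d-2})^{T}\). The projectors \(P_0\) and \(P_k\) contribute only to the \((1,1)\) and \((2,2)\) entries. Multiplying \(\gamma\frac1d uu^T\) out reproduces \eqref{eq:first-3x3-block} entry by entry.

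For the second block I would apply the same argument in the Fourier-conjugated picture. The range of \(H_{d+k}-c\mathbb{I}_d\) lies in \(\operatorname{span}\{\ket{+},\ket{0},\ket{\phi_k}\}\), and I would show this equals \(\mathcal{L}_k\) and that the given vectors are orthonormal. First, \(\ket{v}\) is the normalized component of \(\ket{0}\) orthogonal to \(\ket{+}\), since \(\langle 0|+\rangle=1/\sqrt{d}\). Next, \(\ket{\phi_k}=U_d\ket{k}\) has all coordinates equal to \(\omega^{mk}/\sqrt d\). Hence \(\langle +|\phi_k\rangle=\langle 0|k\rangle=0\) for \(k\neq0\), and \(\langle 0|\phi_k\rangle=1/\sqrt d\). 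It follows that \(\langle v|\phi_k\rangle=\frac{1}{\sqrt{d-1}}\), and then \(\ket{z_k}\) is the normalized Gram–Schmidt remainder: \(\|\phi_k-\tfrac{1}{\sqrt{d-1}}v\|^2=1-\tfrac1{d-1}=\tfrac{d-2}{d-1}\). In the ordered basis \(\{\ket{+},\ket{v},\ket{z_k}\}\) the ingredients are as follows. The projector \(P_+\) is \(\mathrm{diag}(1,0,0)\). The vector \(\ket{0}\) has coordinates \((1/\sqrt d,\sqrt{(d-1)/d},0)\), so \(P_0\) contributes \(\gamma/d\), \(\gamma\sqrt{d-1}/d\), and \(\gamma(d-1)/d\). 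The vector \(\ket{\phi_k}\) has coordinates \((0,1/\sqrt{d-1},\sqrt{(d-2)/(d-1)})\), which gives the \(\delta\)-entries. Together these reproduce \eqref{eq:second-3x3-block}. The invariance of \(\mathcal{L}_k\) under \(H_{d+k}\), and the decomposition \eqref{eq:Hdk-direct-sum}, then follow exactly as in the first case.

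For isospectrality \eqref{eq:block-isospectrality}, I would use \eqref{eq:H-spectral-equivalence}: \(H_{d+k}=U_dH_kU_d^\dagger\), so \(\spec(B^{(1)})\cup\{c^{(d-3)}\}=\spec(B^{(2)})\cup\{c^{(d-3)}\}\) as multisets. Cancelling the common eigenvalue \(c\) with multiplicity \(d-3\) gives equality of the characteristic polynomials of the two \(3\times3\) blocks. This is because the characteristic polynomial of the full operator is \((\lambda-c)^{d-3}\det(\lambda-B^{(i)})\) in both cases, and polynomial division by \((\lambda-c)^{d-3}\) is unambiguous.

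The only step needing care is the orthonormality bookkeeping for \(\mathcal{L}_k\), in particular \(\langle +|\phi_k\rangle=0\) and the exact coordinates of \(\ket{0}\) and \(\ket{\phi_k}\); everything else is routine. Primality is not needed for this proposition. The hypothesis \(d\ge5\), or even just \(d\geq 3\), only ensures \(d-2>0\), so that \(\ket{w_k}\) and \(\ket{z_k}\) are well defined.
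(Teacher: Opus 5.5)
Your proposal is correct and takes essentially the same route as the paper. Like the paper, you expand $\ket{+}$ (respectively $\ket{0}$ and $\ket{\phi_k}$) in the orthonormal bases of $\mathcal{K}_k$ and $\mathcal{L}_k$, read off the projector matrices to obtain $B^{(1)}$ and $B^{(2)}$, observe the scalar action $c$ on the complements, and deduce isospectrality from $H_{d+k}=U_dH_kU_d^\dagger$ by cancelling the common eigenvalue $c$ of multiplicity $d-3$. Your characteristic-polynomial argument simply spells out a step the paper states in one line; the paper's appendix additionally exhibits the explicit intertwiner $\Omega$ with $B^{(2)}=\Omega B^{(1)}\Omega^T$.
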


\begin{proof}
The vectors \(\ket{0}\), \(\ket{k}\), and \(\ket{w_k}\) are
orthonormal, and Eq.~\eqref{eq:plus-vector} gives
\begin{equation}
 \ket{+}
 =
 \frac{1}{\sqrt{d}}
 \left(
 \ket{0}+\ket{k}+\sqrt{d-2}\ket{w_k}
 \right).
\label{eq:plus-decomposition-first-basis}
\end{equation}
Hence the ranges of \(P_0\), \(P_k\), and \(P_+\) lie in
\(\mathcal{K}_k\). Equation~\eqref{eq:Hk-rank-one-decomposition}
therefore implies
\begin{equation}
 H_k\big|_{\mathcal{K}_k^\perp}
 =
 c\mathbb{I}_{d-3}.
\label{eq:Hk-complement-action}
\end{equation}
Substitution of Eq.~\eqref{eq:plus-decomposition-first-basis} gives
Eq.~\eqref{eq:first-3x3-block}.

For \(k\neq0\), the Fourier vectors satisfy
\begin{equation}
 \langle+|\phi_k\rangle=0,
 \qquad
 \langle0|\phi_k\rangle=\frac{1}{\sqrt{d}}.
\label{eq:Fourier-overlaps}
\end{equation}
Equations~\eqref{eq:v-definition} and \eqref{eq:zk-definition} show that
\(\ket{+}\), \(\ket{v}\), and \(\ket{z_k}\) are orthonormal. In this
basis,
\begin{equation}
 \ket{0}
 =
 \frac{1}{\sqrt{d}}\ket{+}
 +
 \sqrt{\frac{d-1}{d}}\ket{v},
\label{eq:zero-decomposition-second-basis}
\end{equation}
\begin{equation}
 \ket{\phi_k}
 =
 \frac{1}{\sqrt{d-1}}\ket{v}
 +
 \sqrt{\frac{d-2}{d-1}}\ket{z_k}.
\label{eq:phi-decomposition-second-basis}
\end{equation}
Thus the ranges of \(P_+\), \(P_0\), and \(Q_k\) lie in
\(\mathcal{L}_k\), and
\begin{equation}
 H_{d+k}\big|_{\mathcal{L}_k^\perp}
 =
 c\mathbb{I}_{d-3}.
\label{eq:Hdk-complement-action}
\end{equation}
Equations~\eqref{eq:zero-decomposition-second-basis} and
\eqref{eq:phi-decomposition-second-basis} give
Eq.~\eqref{eq:second-3x3-block}.

Finally, Eq.~\eqref{eq:H-unitary-equivalence} implies that \(H_k\) and
\(H_{d+k}\) are unitarily similar. Since both contain the same scalar block
\(c\mathbb{I}_{d-3}\), their active blocks are isospectral, which proves
Eq.~\eqref{eq:block-isospectrality}.
\end{proof}

Equation~\eqref{eq:block-isospectrality} does not make either matrix
representation redundant. Gershgorin discs are not invariant under a change
of basis. Consequently, applying the theorem to \(B^{(1)}\) and
\(B^{(2)}\) produces two valid but generally different enclosures of the
same three eigenvalues. In the positivity analysis below, we retain
the sharper lower endpoint from one representation and the sharper upper
endpoint from the other.

The remaining operator \(H_d\) is exactly solvable. Define
\begin{equation}
 \ket{s}
 =
 \frac{1}{\sqrt{d-1}}
 \sum_{m=1}^{d-1}\ket{m}
\label{eq:s-vector}
\end{equation}
and
\begin{equation}
 \mathcal{S}
 =
 \operatorname{span}\{\ket{0},\ket{s}\}.
\label{eq:S-subspace}
\end{equation}
In the ordered basis \(\{\ket{0},\ket{s}\}\),
\begin{equation}
 H_d\big|_{\mathcal{S}}
 =
 \frac{1+r}{\sqrt{d}}
 \begin{pmatrix}
 d-1 & \sqrt{d-1}
 \\
 \sqrt{d-1} & d-3
 \end{pmatrix},
\label{eq:Hd-2x2-block}
\end{equation}
whereas
\begin{equation}
 H_d\big|_{\mathcal{S}^\perp}
 =
 -\frac{2(1+r)}{\sqrt{d}}\mathbb{I}_{d-2}.
\label{eq:Hd-complement}
\end{equation}
The two eigenvalues of Eq.~\eqref{eq:Hd-2x2-block} are
\begin{equation}
 \lambda_\pm(H_d)
 =
 \frac{1+r}{\sqrt{d}}
 \left(d-2\pm\sqrt{d}\right).
\label{eq:Hd-two-eigenvalues}
\end{equation}
Consequently, for \(d\geq5\),
\begin{equation}
 \lambda_{\min}(H_d)
 =
 -\frac{2(1+r)}{\sqrt{d}},
\label{eq:Hd-minimum}
\end{equation}
\begin{equation}
 \lambda_{\max}(H_d)
 =
 \frac{1+r}{\sqrt{d}}
 \left(d-2+\sqrt{d}\right).
\label{eq:Hd-maximum}
\end{equation}
Thus the full \(d\)-dimensional positivity problem is reduced to two
basis-dependent Gershgorin analyses of the same \(3\times3\) spectrum, the
scalar eigenvalue \(c\), and the exact spectrum of \(H_d\). The detailed
matrix calculations are collected in Appendix~\ref{app:spectral-details}.

\subsection*{\textit{Analytic positivity interval}}

For a Hermitian matrix \(A=(a_{ij})\), define
\begin{equation}
 R_i(A)=\sum_{j\neq i}|a_{ij}|.
\label{eq:main-Gershgorin-radius}
\end{equation}
The Hermitian form of the Gershgorin theorem gives
\begin{align}
 \lambda_{\min}(A)
 &\geq
 \min_i\bigl(a_{ii}-R_i(A)\bigr),
 \nonumber\\
 \lambda_{\max}(A)
 &\leq
 \max_i\bigl(a_{ii}+R_i(A)\bigr).
\label{eq:main-Gershgorin-bounds}
\end{align}
The theorem and all comparisons used below are given in
Appendix~\ref{app:Gershgorin}.

Application of Eq.~\eqref{eq:main-Gershgorin-bounds} to
\(B^{(1)}\) and \(B^{(2)}\) yields
\begin{align}
 \lambda_{\min}\bigl(B^{(1)}\bigr)&\geq m_d^{(1)},
 &
 \lambda_{\max}\bigl(B^{(1)}\bigr)&\leq M_d^{(1)},
\label{eq:first-basis-Gershgorin}\\
 \lambda_{\min}\bigl(B^{(2)}\bigr)&\geq m_d^{(2)},
 &
 \lambda_{\max}\bigl(B^{(2)}\bigr)&\leq M_d^{(2)}.
\label{eq:second-basis-Gershgorin}
\end{align}
The four endpoints satisfy
\begin{equation}
 m_d^{(1)}<m_d^{(2)}<0,
 \qquad
 0<M_d^{(1)}<M_d^{(2)}.
\label{eq:Gershgorin-endpoint-order}
\end{equation}
Because the two blocks are isospectral, both estimates in
Eqs.~\eqref{eq:first-basis-Gershgorin} and
\eqref{eq:second-basis-Gershgorin} hold simultaneously for the same
spectrum. Hence
\begin{equation}
 \lambda_{\min}(H_k)
 =
 \lambda_{\min}(H_{d+k})
 \geq
 \max\{m_d^{(1)},m_d^{(2)}\}
 =m_d^{(2)},
\label{eq:combined-lower-Gershgorin}
\end{equation}
\begin{equation}
 \lambda_{\max}(H_k)
 =
 \lambda_{\max}(H_{d+k})
 \leq
 \min\{M_d^{(1)},M_d^{(2)}\}
 =M_d^{(1)}.
\label{eq:combined-upper-Gershgorin}
\end{equation}
The explicit endpoints needed in the final result are
\begin{equation}
 m_d^{(2)}
 =
 -\frac{d+\sqrt{d-1}+\sqrt{2d-1}}{\sqrt{d}},
\label{eq:md2-definition}
\end{equation}
\begin{equation}
 M_d^{(1)}
 =
 c+\eta+\frac{2+\sqrt{d-2}}{\sqrt{d}},
\label{eq:Md1-definition}
\end{equation}
where \(c\) and \(\eta\) are defined in
Eqs.~\eqref{eq:auxiliary-parameters} and \eqref{eq:r-eta}. For
\(d\geq5\), Appendix~\ref{app:Gershgorin} also proves
\begin{equation}
 m_d^{(2)}
 <
 \min\{c,\lambda_{\min}(H_d)\},
\label{eq:uniform-min-comparison}
\end{equation}
\begin{equation}
 M_d^{(1)}
 >
 \max\{c,\lambda_{\max}(H_d)\}.
\label{eq:uniform-max-comparison}
\end{equation}
Thus \(m_d^{(2)}\) and \(M_d^{(1)}\) are uniform spectral bounds for
all \(H_j\).

\begin{Theorem}[Analytic positivity bound]
\label{thm:tau-bound}
Let \(d\geq5\) be prime. If
\begin{equation}
 -\frac{1}{(2d-1)M_d^{(1)}}
 \leq
 \tau
 \leq
 -\frac{1}{(2d-1)m_d^{(2)}},
\label{eq:tau-interval}
\end{equation}
then
\begin{equation}
 E_j(\tau)\geq0,
 \qquad
 j=1,\ldots,2d-1.
\label{eq:E-positive}
\end{equation}
Consequently,
\begin{equation}
 \mathcal{E}(\tau)
 =
 \{E_j(\tau):j=1,\ldots,2d-1\}
\label{eq:valid-E-POVM}
\end{equation}
is a valid \(2d-1\)-outcome POVM.
\end{Theorem}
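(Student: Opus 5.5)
The plan is to reduce positivity of each $E_j(\tau)$ to a scalar condition on the extremal eigenvalues of $H_j$. Then we invoke the uniform spectral bounds $m_d^{(2)}$ and $M_d^{(1)}$ established just before the theorem. Since $E_j(\tau)=\frac{1}{2d-1}\mathbb{I}_d+\tau H_j$ and $H_j$ is Hermitian, $E_j(\tau)$ is diagonal in an eigenbasis of $H_j$ with eigenvalues $\frac{1}{2d-1}+\tau\lambda$, $\lambda\in\spec(H_j)$. Hence $E_j(\tau)\geq0$ if and only if $1+(2d-1)\tau\lambda\geq0$ for every $\lambda\in\spec(H_j)$. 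The approach is to show that every eigenvalue of every $H_j$ lies in the fixed interval $[m_d^{(2)},M_d^{(1)}]$ with $m_d^{(2)}<0<M_d^{(1)}$, and then check the linear inequality on that interval.

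First, I collect the spectral enclosure. For $j=k\in\{1,\ldots,d-1\}$ or $j=d+k$, Proposition~\ref{prop:spectral-reduction} gives $\spec(H_j)=\spec(B^{(1)})\cup\{c\}$ (using Eq.~\eqref{eq:H-spectral-equivalence} for $j=d+k$). By Eqs.~\eqref{eq:combined-lower-Gershgorin} and \eqref{eq:combined-upper-Gershgorin}, $\spec(B^{(1)})\subset[m_d^{(2)},M_d^{(1)}]$. By Eqs.~\eqref{eq:uniform-min-comparison} and \eqref{eq:uniform-max-comparison}, also $c\in(m_d^{(2)},M_d^{(1)})$. For $j=d$, the spectrum is exactly $\{\lambda_\pm(H_d),-2(1+r)/\sqrt d\}$. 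Its extremes are Eqs.~\eqref{eq:Hd-minimum} and \eqref{eq:Hd-maximum}, which again lie strictly inside by the same comparisons. The signs $m_d^{(2)}<0<M_d^{(1)}$ come from Eq.~\eqref{eq:Gershgorin-endpoint-order}, or directly from Eq.~\eqref{eq:md2-definition}.

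Second, I fix $\tau$ in the interval~\eqref{eq:tau-interval} and $\lambda\in[m_d^{(2)},M_d^{(1)}]$, and split by the signs of $\tau$ and $\lambda$.
\begin{itemize}
\item If $\tau\lambda\geq0$, then $1+(2d-1)\tau\lambda\geq1>0$ trivially.
\item If $\tau>0$ and $\lambda<0$, then $\lambda\geq m_d^{(2)}$ gives $(2d-1)\tau\lambda\geq(2d-1)\tau m_d^{(2)}$. Since $m_d^{(2)}<0$, the upper endpoint $\tau\leq-1/((2d-1)m_d^{(2)})$ is equivalent to $(2d-1)\tau m_d^{(2)}\geq-1$.
\item If $\tau<0$ and $\lambda>0$, then $\lambda\leq M_d^{(1)}$ gives $(2d-1)\tau\lambda\geq(2d-1)\tau M_d^{(1)}$. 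Since $M_d^{(1)}>0$, the lower endpoint is equivalent to $(2d-1)\tau M_d^{(1)}\geq-1$.
\end{itemize}
In all cases $1+(2d-1)\tau\lambda\geq0$, so $E_j(\tau)\geq0$ for every $j$.

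Finally, Eq.~\eqref{eq:E-normalization} shows $\sum_jE_j(\tau)=\mathbb{I}_d$ for every real $\tau$, so together with positivity $\mathcal{E}(\tau)$ is a valid POVM. The theorem's own argument is elementary. The only delicate point, which I would double-check, is that the Gershgorin bounds from both bases may be combined for one spectrum. This is justified by the isospectrality~\eqref{eq:block-isospectrality}. The genuine analytic burden sits in the appendix comparisons~\eqref{eq:uniform-min-comparison}--\eqref{eq:uniform-max-comparison} and the endpoint ordering, which I take as given here.
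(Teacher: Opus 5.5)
Your proof is correct and follows essentially the same route as the paper: you reduce positivity to $\frac{1}{2d-1}+\tau\lambda\geq0$ on the uniform enclosure $[m_d^{(2)},M_d^{(1)}]$ of all spectra and then check sign cases. The differences are cosmetic: the paper splits only on the sign of $\lambda$ and uses that $\mu$ is affine in $\tau$, so its minimum lies at an interval endpoint, whereas you split on the signs of both $\tau$ and $\lambda$ and also spell out $\spec(H_k)=\spec(B^{(1)})\cup\{c\}$ and the $H_d$ case, which the paper packages into Eq.~\eqref{eq:uniform-spectral-bounds}.
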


\begin{proof}
If \(\lambda\in\spec(H_j)\), then the corresponding eigenvalue of
\(E_j(\tau)\) is
\begin{equation}
 \mu
 =
 \frac{1}{2d-1}+\tau\lambda.
\label{eq:E-eigenvalue}
\end{equation}
Equations~\eqref{eq:combined-lower-Gershgorin}--
\eqref{eq:uniform-max-comparison} imply the uniform estimates
\begin{equation}
 \lambda_{\min}(H_j)\geq m_d^{(2)},
 \qquad
 \lambda_{\max}(H_j)\leq M_d^{(1)}
\label{eq:uniform-spectral-bounds}
\end{equation}
for every \(j=1,\ldots,2d-1\). If \(\lambda\geq0\), the smallest
possible value of Eq.~\eqref{eq:E-eigenvalue} over the interval
\eqref{eq:tau-interval} is attained at its lower endpoint. Hence
\begin{equation}
 \mu
 \geq
 \frac{1}{2d-1}
 -
 \frac{\lambda}{(2d-1)M_d^{(1)}}
 \geq0,
\label{eq:positive-lambda-positivity}
\end{equation}
where the last inequality follows from \(\lambda\leq M_d^{(1)}\). If
\(\lambda\leq0\), the smallest value is attained at the upper endpoint, so
\begin{equation}
 \mu
 \geq
 \frac{1}{2d-1}
 -
 \frac{\lambda}{(2d-1)m_d^{(2)}}
 \geq0,
\label{eq:negative-lambda-positivity}
\end{equation}
because \(\lambda\geq m_d^{(2)}\) and both quantities are nonpositive.
Thus all eigenvalues in Eq.~\eqref{eq:E-eigenvalue} are nonnegative.
Normalization was established in Eq.~\eqref{eq:E-normalization}, so the
family is a POVM.
\end{proof}

The interval in Eq.~\eqref{eq:tau-interval} is a sufficient analytic
positivity interval. Since Gershgorin bounds are generally not tight, it need
not coincide with the maximal interval of \(\tau\) for which all effects
remain positive semidefinite.

The positive endpoint will be denoted by
\begin{align}
 \tau_+(d)
 &=-\frac{1}{(2d-1)m_d^{(2)}}
 \nonumber\\
 &=
 \frac{\sqrt{d}}
 {(2d-1)\bigl(d+\sqrt{d-1}+\sqrt{2d-1}\bigr)}.
\label{eq:tau-positive-endpoint}
\end{align}

The symmetry of the Hermitian operator basis gives
\begin{equation}
 \Tr(H_j^2)
 =
 2(d-1)(1+r)^2,
 \qquad
 j=1,\ldots,2d-1,
\label{eq:Hj-purity}
\end{equation}
and, for \(j\neq\ell\),
\begin{equation}
 \Tr(H_jH_\ell)
 =
 -(1+r)^2.
\label{eq:Hj-mutual-overlap}
\end{equation}
The derivation is included in Appendix~\ref{app:MUB-MUM}. Hence every
effect of \(\mathcal{E}(\tau)\) has the same purity
\begin{equation}
 x(\tau)
 =
 \Tr(E_j(\tau)^2)
 =
 \frac{d}{(2d-1)^2}
 +
 2(d-1)(1+r)^2\tau^2,
\label{eq:x-tau}
\end{equation}
and every pair of distinct effects has the same overlap
\begin{equation}
 \Tr\bigl(E_j(\tau)E_\ell(\tau)\bigr)
 =
 \frac{d-(2d-1)x(\tau)}{2(d-1)(2d-1)},
 \qquad
 j\neq\ell.
\label{eq:E-mutual-overlap}
\end{equation}

\subsection*{\textit{Generalized equiangular measurement}}

We now combine the \(d-1\) MUMs with the POVM
\(\mathcal{E}(\tau)\). This step is meaningful only after
Theorem~\ref{thm:tau-bound}, because positivity of the final block depends
on \(\tau\).

Let
\begin{equation}
 D_d=d^2+d-1,
 \qquad
 q_d=\frac{d}{D_d},
 \qquad
 \gamma_d=\frac{2d-1}{D_d}.
\label{eq:GEAM-weights}
\end{equation}
Define
\begin{align}
 R_{\alpha,k}
 &=
 q_dE_{\alpha,k},
 &&
 \alpha=1,\ldots,d-1,
 \quad
 k=0,\ldots,d-1,
\label{eq:R-MUM-blocks}\\
 R_{d,j}
 &=
 \gamma_dE_j(\tau),
 &&
 j=1,\ldots,2d-1.
\label{eq:R-last-block}
\end{align}
Each block is normalized separately according to
\begin{equation}
 \sum_{k=0}^{d-1}R_{\alpha,k}
 =q_d\mathbb{I}_d,
 \qquad
 \alpha=1,\ldots,d-1,
\label{eq:GEAM-MUM-block-normalization}
\end{equation}
and
\begin{equation}
 \sum_{j=1}^{2d-1}R_{d,j}
 =\gamma_d\mathbb{I}_d.
\label{eq:GEAM-last-block-normalization}
\end{equation}
Since
\begin{equation}
 (d-1)q_d+\gamma_d=1,
\label{eq:GEAM-weight-normalization}
\end{equation}
the family
\begin{equation}
 \mathcal{R}(t,\tau)
 =
 \{R_{\alpha,k}\}_{\alpha=1}^{d-1}
 \cup
 \{R_{d,j}\}_{j=1}^{2d-1}
\label{eq:GEAM-family}
\end{equation}
satisfies
\begin{equation}
 \sum_{\alpha=1}^{d-1}\sum_{k=0}^{d-1}R_{\alpha,k}
 +
 \sum_{j=1}^{2d-1}R_{d,j}
 =
 \mathbb{I}_d.
\label{eq:GEAM-total-normalization}
\end{equation}
Every element has the common trace
\begin{equation}
 \Tr R_{\alpha,k}
 =
 \Tr R_{d,j}
 =q_d.
\label{eq:GEAM-common-trace}
\end{equation}

For \(\alpha=1,\ldots,d-1\),
\begin{align}
 \Tr(R_{\alpha,k}^2)
 &=q_d^2\kappa,
\label{eq:GEAM-MUM-purity}\\
 \Tr(R_{\alpha,k}R_{\alpha,\ell})
 &=q_d^2\frac{1-\kappa}{d-1},
 \qquad k\neq\ell.
\label{eq:GEAM-MUM-overlap}
\end{align}
For the final block, Eqs.~\eqref{eq:x-tau} and
\eqref{eq:E-mutual-overlap} give
\begin{align}
 \Tr(R_{d,j}^2)
 &=\gamma_d^2x(\tau),
\label{eq:GEAM-last-purity}\\
 \Tr(R_{d,j}R_{d,\ell})
 &=
 \gamma_d^2
 \frac{d-(2d-1)x(\tau)}{2(d-1)(2d-1)},
 \qquad j\neq\ell.
\label{eq:GEAM-last-overlap}
\end{align}
The traceless parts of the MUM blocks are supported on Hermitian-basis
subspaces orthogonal to the one used in the construction of \(H_j\).
Therefore
\begin{equation}
 \Tr\bigl(E_{\alpha,k}E_j(\tau)\bigr)
 =
 \frac{1}{2d-1},
\label{eq:unscaled-cross-overlap}
\end{equation}
and hence elements from distinct blocks satisfy
\begin{equation}
 \Tr(R_{\alpha,k}R_{\beta,\ell})
 =
 \frac{q_d^2}{d},
 \qquad
 \alpha\neq\beta.
\label{eq:GEAM-cross-overlap}
\end{equation}
Equations~\eqref{eq:GEAM-common-trace}--
\eqref{eq:GEAM-cross-overlap} show that
\(\mathcal{R}(t,\tau)\) is a GEAM in the sense of
Ref.~\cite{GEAM} whenever \(t\) satisfies Eq.~\eqref{eq:MUM-effects} and
\(\tau\) satisfies Eq.~\eqref{eq:tau-interval}.

\section{Dimension-dependent steering witnesses}
\label{sec:steering-witnesses}

The preceding section provides an explicit GEAM for every prime
\(d\geq5\), every admissible \(t\), and every \(\tau\) in
Eq.~\eqref{eq:tau-interval}. We now specialize the LHS bound of
Ref.~\cite{RutkowskiSiudzinska2025} and give a concrete quantum assemblage
that violates the resulting inequality.

Consider a GEAM grouped into \(L\) blocks,
\begin{equation}
 \mathcal{F}
 =
 \{F_{\alpha,k}:\alpha=1,\ldots,L,\;
 k=1,\ldots,M_\alpha\},
\label{eq:generic-GEAM-frame}
\end{equation}
with \(F_{\alpha,k}\geq0\) and positive block weights
\(\gamma_\alpha>0\), such that
\begin{equation}
 \sum_{k=1}^{M_\alpha}F_{\alpha,k}
 =
 \gamma_\alpha\mathbb{I}_d,
 \qquad
 \sum_{\alpha=1}^{L}\gamma_\alpha=1.
\label{eq:generic-GEAM-normalization}
\end{equation}
By the defining symmetry of a GEAM, the trace and purity are independent of
the outcome label \(k\) within each block. We therefore write
\begin{equation}
 \mathfrak{a}_\alpha
 =
 \Tr F_{\alpha,k}
 =
 \frac{d\gamma_\alpha}{M_\alpha},
\label{eq:generic-a}
\end{equation}
and define \(\mathfrak{b}_\alpha\) by
\begin{equation}
 \Tr(F_{\alpha,k}^2)
 =
 \mathfrak{a}_\alpha^2\mathfrak{b}_\alpha.
\label{eq:generic-b}
\end{equation}
The steering functional associated with this GEAM is
\begin{equation}
 \mathcal{S}_{\mathcal{F}}(\boldsymbol{\sigma})
 =
 \sum_{\alpha=1}^{L}\sum_{k=1}^{M_\alpha}
 \Tr\!\left(F_{\alpha,k}\sigma_{\alpha,k}\right).
\label{eq:generic-GEAM-functional}
\end{equation}

The following LHS estimate was established in
Ref.~\cite{RutkowskiSiudzinska2025}.
\begin{Proposition}[LHS upper bound for a GEAM functional]
\label{prop:GEAM-LHS-bound}
For the steering functional generated by \(\{F_{\alpha,k}\}\), every LHS
assemblage satisfies
\begin{equation}
 \mathcal{S}_{\mathcal{F}}
 \leq
 \xi_+(\mathcal{F}),
\label{eq:generic-steering-bound}
\end{equation}
where
\begin{equation}
 \xi_+(\mathcal{F})
 =
 \frac{1}{d}
 \left[
 \sum_{\alpha=1}^{L}\mathfrak{a}_\alpha
 +
 \sqrt{
 (d-1)
 \sum_{\alpha=1}^{L}
 \mathfrak{a}_\alpha^2
 \left(d\mathfrak{b}_\alpha-1\right)
 }
 \right].
\label{eq:generic-xi-plus}
\end{equation}
\end{Proposition}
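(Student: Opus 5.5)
The bound is proved by maximizing the functional over deterministic LHS strategies and then controlling the resulting operator norm through the frame geometry. By linearity and convexity, the supremum of \(\mathcal{S}_{\mathcal{F}}\) over LHS assemblages \eqref{eq:LHS-model} is attained at deterministic strategies. Each response function \(p(a|x,\lambda)\) is a convex combination of deterministic ones, and each \(\rho_\lambda\) can be replaced by a pure state. Hence
\begin{equation*}
\beta_{\LHS}
=\max_{f}\lambda_{\max}\Bigl(\sum_{\alpha=1}^{L}F_{\alpha,f(\alpha)}\Bigr),
\end{equation*}
where the maximum runs over all choice functions \(f\) that select one outcome \(k=f(\alpha)\) in each block. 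It therefore suffices to bound the largest eigenvalue of \(X_f=\sum_\alpha F_{\alpha,f(\alpha)}\) uniformly in \(f\).

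Next, split each effect into its trace and traceless parts, \(F_{\alpha,k}=\frac{\mathfrak{a}_\alpha}{d}\mathbb{I}_d+\widetilde F_{\alpha,k}\). This gives \(X_f=\frac{1}{d}\bigl(\sum_\alpha\mathfrak{a}_\alpha\bigr)\mathbb{I}_d+Y_f\), with \(Y_f\) traceless, which already produces the first term of \(\xi_+\). For a traceless Hermitian \(Y\) on \(\mathbb{C}^d\) one has \(\lambda_{\max}(Y)\leq\sqrt{\tfrac{d-1}{d}\Tr(Y^2)}\). Indeed, if \(\lambda_{\max}=\lambda\), the remaining \(d-1\) eigenvalues sum to \(-\lambda\), so their squares sum to at least \(\lambda^2/(d-1)\). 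It thus remains to compute \(\Tr(Y_f^2)\).

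Expanding gives \(\Tr(Y_f^2)=\sum_\alpha\Tr(\widetilde F_{\alpha,f(\alpha)}^2)+\sum_{\alpha\neq\beta}\Tr(\widetilde F_{\alpha,f(\alpha)}\widetilde F_{\beta,f(\beta)})\). The diagonal terms equal \(\mathfrak{a}_\alpha^2\mathfrak{b}_\alpha-\mathfrak{a}_\alpha^2/d=\mathfrak{a}_\alpha^2(d\mathfrak{b}_\alpha-1)/d\). The key structural input is the GEAM property that effects from distinct blocks have overlap \(\Tr(F_{\alpha,k}F_{\beta,\ell})=\mathfrak{a}_\alpha\mathfrak{a}_\beta/d\), as in Eq.~\eqref{eq:GEAM-cross-overlap}. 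Equivalently, their traceless parts are Hilbert--Schmidt orthogonal, so all cross terms vanish. Combining the pieces gives
\begin{equation*}
\lambda_{\max}(Y_f)\leq\frac{1}{d}\sqrt{(d-1)\sum_\alpha\mathfrak{a}_\alpha^2(d\mathfrak{b}_\alpha-1)},
\end{equation*}
which is independent of \(f\) and yields Eq.~\eqref{eq:generic-xi-plus}.

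The main obstacle is to justify the cross-block orthogonality in the general definition of a GEAM from Ref.~\cite{GEAM}. If that definition only fixes the overlaps within each block, I would instead derive the cross-block overlaps from the block normalizations \eqref{eq:generic-GEAM-normalization} together with the equiangular structure, and then verify that the cross terms cancel. The reduction to deterministic strategies and the eigenvalue inequality for traceless matrices are routine.
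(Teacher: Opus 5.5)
The paper does not prove this proposition; it imports it from Ref.~\cite{RutkowskiSiudzinska2025}, so there is no in-text argument to compare your route against. Judged on its own, your proposal is a correct and complete proof. The reduction to $\max_f\lambda_{\max}(X_f)$, the trace/traceless split, and your bound $\lambda_{\max}(Y)\leq\sqrt{(d-1)\Tr(Y^2)/d}$ for traceless $Y$ give exactly $\xi_+$. That bound is the spectral form of Cauchy--Schwarz, $\Tr[Y_f(\rho-\mathbb{I}_d/d)]\leq\|Y_f\|_2\,\|\rho-\mathbb{I}_d/d\|_2$, with $\|\rho-\mathbb{I}_d/d\|_2^2=(d-1)/d$ for pure $\rho$. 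You also never need the within-block overlaps, only the constant trace and purity in each block and the cross-block condition.

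The step you flag as the main obstacle is not one. The GEAM conditions, which the paper checks for $\mathcal{R}(t,\tau)$ in Eqs.~\eqref{eq:GEAM-common-trace}--\eqref{eq:GEAM-cross-overlap}, include that $\Tr(F_{\alpha,k}F_{\beta,\ell})$ is independent of $k,\ell$ for $\alpha\neq\beta$. Summing over $\ell$ and using $\sum_\ell F_{\beta,\ell}=\gamma_\beta\mathbb{I}_d$ then forces this constant to equal $\gamma_\beta\mathfrak{a}_\alpha/M_\beta=\mathfrak{a}_\alpha\mathfrak{a}_\beta/d$. That is precisely the Hilbert--Schmidt orthogonality of the traceless parts you need.

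Be aware that your fallback depends on this constancy. The block normalizations alone fix only the sums $\sum_\ell\Tr(F_{\alpha,k}F_{\beta,\ell})$, not the individual cross overlaps. Without the equiangular cross-block condition, the cross terms in $\Tr(Y_f^2)$ need not cancel for a given choice function $f$.
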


The quantity \(\xi_+\) is an analytic upper bound on the optimal LHS value;
it need not be attained for every frame. For
\(\mathcal{R}(t,\tau)\), there are \(L=d\) blocks with
\begin{equation}
 M_\alpha=d,
 \quad \alpha=1,\ldots,d-1,
 \qquad
 M_d=2d-1.
\label{eq:our-block-sizes}
\end{equation}
The trace parameters are
\begin{equation}
 \mathfrak{a}_\alpha=q_d,
 \qquad
 \alpha=1,\ldots,d,
\label{eq:our-a-parameters}
\end{equation}
and the normalized purities are
\begin{equation}
 \mathfrak{b}_\alpha=\kappa,
 \qquad
 \alpha=1,\ldots,d-1,
\label{eq:our-b-MUM}
\end{equation}
\begin{equation}
 \mathfrak{b}_d
 =
 \frac{(2d-1)^2}{d^2}x(\tau).
\label{eq:our-b-last}
\end{equation}
Define
\begin{equation}
 Z(d,t,\tau)
 =
 (d-1)t^2
 +
 \frac{2(2d-1)^2(1+r)^2}{d}\tau^2.
\label{eq:Z-definition}
\end{equation}
Substitution into Eq.~\eqref{eq:generic-xi-plus} gives
\begin{equation}
 \xi_+(d,t,\tau)
 =
 \frac{d+(d-1)\sqrt{Z(d,t,\tau)}}{D_d}.
\label{eq:explicit-xi-plus}
\end{equation}

We now specify a quantum realization. Let
\begin{equation}
 \ket{\Phi_d}
 =
 \frac{1}{\sqrt{d}}
 \sum_{n=0}^{d-1}\ket{n}\otimes\ket{n}
\label{eq:maximally-entangled-state}
\end{equation}
and \(\Phi_d=\proj{\Phi_d}\). For each block, Alice performs the POVM
\begin{equation}
 A_{\alpha,k}
 =
 \frac{F_{\alpha,k}^{\mathsf{T}}}{\gamma_\alpha},
\label{eq:Alice-GEAM-POVM}
\end{equation}
where the transpose is taken in the computational basis. Equation
\eqref{eq:generic-GEAM-normalization} implies
\begin{equation}
 \sum_{k=1}^{M_\alpha}A_{\alpha,k}
 =
 \mathbb{I}_d.
\label{eq:Alice-GEAM-normalization}
\end{equation}
The corresponding assemblage on Bob's side is
\begin{equation}
 \sigma_{\alpha,k}^{\Phi}
 =
 \Tr_A\!\left[
 (A_{\alpha,k}\otimes\mathbb{I}_d)\Phi_d
 \right]
 =
 \frac{F_{\alpha,k}}{d\gamma_\alpha}.
\label{eq:maximally-entangled-assemblage}
\end{equation}
It is a standard no-signalling assemblage, since for every block
\begin{equation}
 \sum_{k=1}^{M_\alpha}\sigma_{\alpha,k}^{\Phi}
 =
 \frac{1}{d\gamma_\alpha}
 \sum_{k=1}^{M_\alpha}F_{\alpha,k}
 =
 \frac{1}{d}\mathbb{I}_d.
\label{eq:maximally-entangled-no-signalling}
\end{equation}
Thus Bob's reduced state is \(\rho_B=\mathbb{I}_d/d\), independently of
Alice's measurement setting.
For the functional generated by the same frame, the quantum value is
\begin{align}
 Q(\mathcal{F})
 &:=
 \sum_{\alpha=1}^{L}\sum_{k=1}^{M_\alpha}
 \Tr\!\left(
 F_{\alpha,k}\sigma_{\alpha,k}^{\Phi}
 \right)
\nonumber\\
 &=
 \sum_{\alpha=1}^{L}
 \mathfrak{a}_\alpha\mathfrak{b}_\alpha.
\label{eq:generic-quantum-value}
\end{align}
For \(\mathcal{R}(t,\tau)\), this simplifies to
\begin{equation}
 Q(d,t,\tau)
 =
 \frac{d+(d-1)Z(d,t,\tau)}{D_d}.
\label{eq:explicit-quantum-value}
\end{equation}

\begin{Theorem}[Dimension-dependent steering witness]
\label{thm:dimension-witness}
Let \(d\geq5\) be prime, let \(t\) satisfy
Eq.~\eqref{eq:MUM-effects}, and let \(\tau\) satisfy
Eq.~\eqref{eq:tau-interval}. The standard assemblage in
Eq.~\eqref{eq:maximally-entangled-assemblage} violates the GEAM steering
inequality whenever
\begin{equation}
 Z(d,t,\tau)>1.
\label{eq:steering-criterion-Z}
\end{equation}
More explicitly,
\begin{equation}
 \mathcal{S}_{\mathcal{R}}
 \leq
 \xi_+(d,t,\tau)
\label{eq:our-steering-inequality}
\end{equation}
for all LHS assemblages, whereas the quantum assemblage gives
\begin{equation}
 \mathcal{S}_{\mathcal{R}}
 =
 Q(d,t,\tau)
 >
 \xi_+(d,t,\tau).
\label{eq:our-quantum-violation}
\end{equation}
Equivalently, the optimal LHS value obeys the strict chain
\begin{equation}
 \beta_{\LHS}(\mathcal{R})
 \leq
 \xi_+(d,t,\tau)
 <
 Q(d,t,\tau).
\label{eq:LHS-bound-chain}
\end{equation}
\end{Theorem}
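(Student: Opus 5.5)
The proof has three steps. First, Theorem~\ref{thm:tau-bound} makes $\mathcal{R}(t,\tau)$ a valid GEAM, so Proposition~\ref{prop:GEAM-LHS-bound} yields the LHS inequality~\eqref{eq:our-steering-inequality} with the explicit value~\eqref{eq:explicit-xi-plus}. Second, Eq.~\eqref{eq:generic-quantum-value} gives the quantum value~\eqref{eq:explicit-quantum-value}. Third, comparing the two closed forms reduces the violation to an elementary inequality in $\sqrt{Z}$.

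For the first step, I will check that the GEAM hypotheses of Proposition~\ref{prop:GEAM-LHS-bound} hold. The effects are positive: $E_{\alpha,k}\ge 0$ for $t$ in the range~\eqref{eq:MUM-effects}, and $E_j(\tau)\ge0$ by Theorem~\ref{thm:tau-bound}. The weights $q_d,\gamma_d$ are positive and satisfy~\eqref{eq:GEAM-weight-normalization}. The traces and purities are label-independent by Eqs.~\eqref{eq:GEAM-common-trace}--\eqref{eq:GEAM-last-purity}.

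I will then verify the substitution leading to~\eqref{eq:explicit-xi-plus}, which is the only computation that needs care. Since $\mathfrak a_\alpha=q_d$ for all blocks, $\sum_\alpha\mathfrak a_\alpha=dq_d$. For the MUM blocks, $d\kappa-1=(d-1)t^2$. For the last block,
\[
d\mathfrak b_d-1=\frac{(2d-1)^2}{d}x(\tau)-1=\frac{2(2d-1)^2(d-1)(1+r)^2}{d}\tau^2
\]
by Eq.~\eqref{eq:x-tau}, because the constant term $\frac{(2d-1)^2}{d}\cdot\frac{d}{(2d-1)^2}=1$ cancels the $-1$. Summing over blocks gives $\sum_\alpha\mathfrak a_\alpha^2(d\mathfrak b_\alpha-1)=q_d^2(d-1)Z$. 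Hence
\[
\xi_+=\frac{1}{d}\Bigl[dq_d+(d-1)q_d\sqrt{Z}\Bigr]=\frac{d+(d-1)\sqrt Z}{D_d},
\]
using $q_d/d=1/D_d$.

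For the second step, Eq.~\eqref{eq:maximally-entangled-assemblage} gives $\Tr(F\sigma^\Phi)=\Tr(F^2)/(d\gamma_\alpha)$. Summing over $k$ and using $M_\alpha\mathfrak a_\alpha=d\gamma_\alpha$ yields $\sum_\alpha\mathfrak a_\alpha\mathfrak b_\alpha$. Substituting,
\[
q_d\Bigl[(d-1)\kappa+\tfrac{(2d-1)^2}{d^2}x\Bigr]=\frac{q_d}{d}\Bigl[d+(d-1)Z\Bigr],
\]
which is~\eqref{eq:explicit-quantum-value}. Here I use $(d-1)\bigl((d-1)t^2+1\bigr)+1+\frac{2(2d-1)^2(d-1)(1+r)^2}{d}\tau^2=d+(d-1)Z$.

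Finally, $Q-\xi_+=\frac{d-1}{D_d}\bigl(Z-\sqrt Z\bigr)=\frac{d-1}{D_d}\sqrt Z\bigl(\sqrt Z-1\bigr)$. This is strictly positive exactly when $Z>1$, since $Z\ge0$ and $d>1$. Combined with $\beta_{\LHS}\le\xi_+$, which holds because every LHS assemblage obeys~\eqref{eq:generic-steering-bound}, this gives the chain~\eqref{eq:LHS-bound-chain}. The main obstacle is only the bookkeeping in the two substitutions, namely the cancellation of the $-1$ in $d\mathfrak b_d-1$. No further spectral analysis is required.
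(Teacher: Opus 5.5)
Your proposal is correct and follows the paper's argument. The paper's proof takes the closed forms \eqref{eq:explicit-xi-plus} and \eqref{eq:explicit-quantum-value}, which are stated in the text just before the theorem, and observes that $Q-\xi_+=\frac{d-1}{D_d}\bigl(Z-\sqrt{Z}\bigr)$ is positive precisely when $Z>1$. Your re-derivation of those two substitutions, including the cancellation in $d\mathfrak{b}_d-1$ and the use of $M_\alpha\mathfrak{a}_\alpha=d\gamma_\alpha$, is accurate and makes explicit what the paper leaves implicit.
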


\begin{proof}
Equations~\eqref{eq:explicit-xi-plus} and
\eqref{eq:explicit-quantum-value} imply
\begin{equation}
 Q(d,t,\tau)-\xi_+(d,t,\tau)
 =
 \frac{d-1}{D_d}
 \left[Z(d,t,\tau)-\sqrt{Z(d,t,\tau)}\right].
\label{eq:quantum-minus-LHS-bound}
\end{equation}
Since \(Z(d,t,\tau)\geq0\), the right-hand side is positive precisely
when \(Z(d,t,\tau)>1\).
\end{proof}

If the functional is normalized by its quantum value,
\begin{equation}
 \widehat{F}_{\alpha,k}
 =
 \frac{F_{\alpha,k}}{Q(d,t,\tau)},
\label{eq:normalized-functional}
\end{equation}
then the assemblage in Eq.~\eqref{eq:maximally-entangled-assemblage}
attains
\begin{equation}
 \widehat{\mathcal{S}}_Q=1,
\label{eq:normalized-quantum-value}
\end{equation}
while every LHS assemblage obeys
\begin{equation}
 \widehat{\mathcal{S}}_{\LHS}
 \leq
 \frac{\xi_+(d,t,\tau)}{Q(d,t,\tau)}
 <1.
\label{eq:normalized-LHS-bound}
\end{equation}
This makes the witness normalization used in
Ref.~\cite{RutkowskiSiudzinska2025} explicit.

Finally, the criterion is satisfied in every admissible dimension. Choose
\begin{equation}
 t=1,
 \qquad
 \tau=\tau_+(d),
\label{eq:explicit-witness-choice}
\end{equation}
where \(\tau_+(d)\) is given in Eq.~\eqref{eq:tau-positive-endpoint}.
Then
\begin{equation}
 Z\bigl(d,1,\tau_+(d)\bigr)
 >
 d-1
 >1
\label{eq:Z-explicit-choice}
\end{equation}
for every prime \(d\geq5\). Thus the construction yields an explicit
steering witness in each such dimension.

\section{Conclusions}
\label{sec:conclusions}

We have constructed a fully analytic family of generalized equiangular
measurements and associated steering witnesses in every prime dimension
\(d\geq5\). The additional \(2d-1\)-outcome family is normalized for all
real \(\tau\), but becomes a physical POVM only after positivity is
established.

The main structural result is the reduction of all noncentral operators to
three-dimensional invariant blocks, together with an exact solution of the
remaining central operator \(H_d\) on a two-dimensional active subspace. Two
unitarily equivalent \(3\times3\) matrices describe the same noncentral active
spectrum in different bases. Their exact eigenvalues are therefore identical,
but their Gershgorin enclosures are different. Combining the stronger lower
endpoint from the second representation with the stronger upper endpoint from
the first yields the uniform analytic interval \eqref{eq:tau-interval}. No
numerical spectral fit is required.

After positivity is proved, the POVM combines with \(d-1\) MUMs to form a
GEAM. Its block normalizations, purities, and pairwise overlaps are obtained
explicitly, which is sufficient for specializing the GEAM steering bound.

Finally, a maximally entangled state and transposed, block-normalized GEAM
effects provide a concrete standard assemblage. Its quantum value is
\(Q(d,t,\tau)\), while every LHS assemblage is bounded by the analytic
quantity \(\xi_+(d,t,\tau)\). The simple criterion
\(Z(d,t,\tau)>1\) guarantees a violation. In particular, the choice
\(t=1\) and \(\tau=\tau_+(d)\) gives a witness for every prime
\(d\geq5\). The present work therefore supplies the explicit positive
measurement realization that complements the general GEAM steering theory of
Ref.~\cite{RutkowskiSiudzinska2025}.

\section{Acknowledgements}

This research was funded in whole or in part by the National Science Centre
Grant No.~2025/58/E/ST2/00336. For the purpose of Open Access, the author
has applied a CC-BY public copyright licence to any Author Accepted
Manuscript version arising from this submission.


\onecolumngrid
\appendix

\section{MUBs, MUMs, and the Hermitian operator basis}
\label{app:MUB-MUM}

For prime \(d>2\), a complete set of \(d+1\) MUBs can be written as
\begin{equation}
 \mathcal{B}_0
 =
 \{\ket{k}:k=0,\ldots,d-1\}
\label{eq:appA-B0}
\end{equation}
and
\begin{equation}
 \mathcal{B}_\alpha
 =
 \left\{
 \frac{1}{\sqrt{d}}
 \sum_{\ell=0}^{d-1}
 \omega^{\alpha\ell^2+k\ell}\ket{\ell}
 :k=0,\ldots,d-1
 \right\},
 \qquad
 \alpha=1,\ldots,d.
\label{eq:appA-Balpha}
\end{equation}
The corresponding unitary matrices are
\begin{equation}
 U_0=\mathbb{I}_d,
 \qquad
 U_\alpha
 =
 \frac{1}{\sqrt{d}}
 \sum_{m,n=0}^{d-1}
 \omega^{\alpha m^2+mn}\ket{m}\bra{n},
\label{eq:appA-Ualpha}
\end{equation}
and
\begin{equation}
 P_{\alpha,k}=U_\alpha P_kU_\alpha^\dagger.
\label{eq:appA-Palpha}
\end{equation}

The associated Hermitian operator basis contains
\begin{equation}
 G_{\alpha,k}
 =
 U_\alpha G_kU_\alpha^\dagger,
 \qquad
 \alpha=0,\ldots,d,
 \quad
 k=1,\ldots,d-1,
\label{eq:appA-Galpha}
\end{equation}
with
\begin{equation}
 \Tr(G_{\alpha,k}G_{\beta,\ell})
 =
 \delta_{\alpha\beta}\delta_{k\ell}.
\label{eq:appA-G-orthonormality}
\end{equation}
Let
\begin{equation}
 G_\alpha
 =
 \sum_{\ell=1}^{d-1}G_{\alpha,\ell}.
\label{eq:appA-Galpha-sum}
\end{equation}
In the standard construction of MUMs \cite{Kalev2014}, one first introduces
a parameter \(\theta\) and writes
\begin{align}
 \widetilde{E}_{\alpha,0}
 &=
 \frac{1}{d}\mathbb{I}_d
 +
 \theta(\sqrt{d}+1)G_\alpha,
\label{eq:appA-MUM0-theta}\\
 \widetilde{E}_{\alpha,k}
 &=
 \frac{1}{d}\mathbb{I}_d
 +
 \theta
 \left[
 G_\alpha-\sqrt{d}(\sqrt{d}+1)G_{\alpha,k}
 \right],
 \quad k=1,\ldots,d-1.
\label{eq:appA-MUMk-theta}
\end{align}
The parameter used in the main text is the rescaled quantity
\begin{equation}
 t
 =
 \sqrt{d}(\sqrt{d}+1)\theta.
\label{eq:appA-t-theta}
\end{equation}
Substitution of the explicit operators \(G_{\alpha,k}\) into
Eqs.~\eqref{eq:appA-MUM0-theta} and
\eqref{eq:appA-MUMk-theta} then gives the simplified effects. From this
point on we drop the tilde and write
\begin{equation}
 E_{\alpha,k}
 =
 \frac{1-t}{d}\mathbb{I}_d+tP_{\alpha,k},
\label{eq:appA-MUM-simplified}
\end{equation}
for every \(k=0,\ldots,d-1\). Its overlap relations are
\begin{align}
 \Tr(E_{\alpha,k}E_{\alpha,\ell})
 &=
 \kappa\delta_{k\ell}
 +
 \frac{1-\kappa}{d-1}(1-\delta_{k\ell}),
\label{eq:appA-MUM-same}\\
 \Tr(E_{\alpha,k}E_{\beta,\ell})
 &=
 \frac{1}{d},
 \qquad \alpha\neq\beta,
\label{eq:appA-MUM-different}
\end{align}
where
\begin{equation}
 \kappa
 =
 \frac{1+(d-1)t^2}{d}.
\label{eq:appA-t-kappa}
\end{equation}
Positivity of Eq.~\eqref{eq:appA-MUM-simplified} is equivalent to
\(-1/(d-1)\leq t\leq1\). For \(t\neq0\), Eq.~\eqref{eq:appA-t-kappa}
gives \(1/d<\kappa\leq1\), as in the standard definition of a MUM. At
\(t=0\), all effects reduce to \(\mathbb{I}_d/d\), giving the trivial
boundary value \(\kappa=1/d\).

The family
\begin{equation}
 \mathcal{F}_0
 =
 \{G_k,U_dG_kU_d^\dagger:k=1,\ldots,d-1\}
\label{eq:appA-F-family}
\end{equation}
contains \(2(d-1)\) orthonormal traceless operators. Hence
\begin{equation}
 \Tr(G^2)=2(d-1).
\label{eq:appA-G-square}
\end{equation}
Write the elements of \(\mathcal{F}_0\) as \(F_i\),
\(i=1,\ldots,2d-2\). Then
\begin{equation}
 G=\sum_{i=1}^{2d-2}F_i,
 \qquad
 \widetilde{H}_i=G-\eta F_i
\label{eq:appA-simplex-representation}
\end{equation}
for the \(2d-2\) noncentral operators, while
\(H_d=(1+r)G\). The family \(\{\widetilde{H}_i\}_i\) is precisely the
set \(\{H_1,\ldots,H_{d-1},H_{d+1},\ldots,H_{2d-1}\}\) after
relabeling. Orthogonality gives
\begin{equation}
 \Tr(\widetilde{H}_i^2)
 =
 2(d-1)(1+r)^2
\label{eq:appA-H-square}
\end{equation}
for every noncentral index \(i\). For the central operator,
\begin{equation}
 \Tr(H_d^2)
 =(1+r)^2\Tr(G^2)
 =2(d-1)(1+r)^2.
\label{eq:appA-Hd-square}
\end{equation}
For two distinct noncentral indices,
\begin{equation}
 \Tr(\widetilde{H}_i\widetilde{H}_j)
 =
 2(d-1)-2\eta
 =
 -(1+r)^2.
\label{eq:appA-H-offdiagonal-noncentral}
\end{equation}
For a noncentral index and the central operator,
\begin{equation}
 \Tr(\widetilde{H}_iH_d)
 =
 (1+r)\bigl[2(d-1)-\eta\bigr]
 =
 -(1+r)^2.
\label{eq:appA-H-offdiagonal-central}
\end{equation}
Equations~\eqref{eq:appA-H-square}--\eqref{eq:appA-H-offdiagonal-central}
prove Eqs.~\eqref{eq:Hj-purity} and \eqref{eq:Hj-mutual-overlap}.

Finally, the traceless parts of the MUMs with
\(\alpha=1,\ldots,d-1\) lie in the mutually orthogonal operator
subspaces spanned by \(\{G_{\alpha,k}\}_k\). The operators \(H_j\) lie
in the sum of the two remaining subspaces, \(\alpha=0\) and \(\alpha=d\).
Thus
\begin{equation}
 \Tr\!\left[
 \left(E_{\alpha,k}-\frac{1}{d}\mathbb{I}_d\right)H_j
 \right]
 =0,
\label{eq:appA-cross-orthogonality}
\end{equation}
which gives Eq.~\eqref{eq:unscaled-cross-overlap}.

\section{Derivation of the invariant blocks}
\label{app:spectral-details}

Let \(s=\sqrt{d}\). From Eq.~\eqref{eq:Gk-definition},
\begin{align}
 \sum_{k=1}^{d-1}G_k
 &=
 (d-1)a(\mathbb{I}_d+sP_0)
 -
 \sum_{k=1}^{d-1}P_k
\nonumber\\
 &=
 (d-1)a(\mathbb{I}_d+sP_0)
 -(\mathbb{I}_d-P_0).
\label{eq:appB-Gk-sum-step}
\end{align}
Since
\begin{equation}
 (d-1)a=\frac{s-1}{s},
 \qquad
 (d-1)as=s-1,
\label{eq:appB-a-identities}
\end{equation}
we obtain
\begin{equation}
 \sum_{k=1}^{d-1}G_k
 =
 sP_0-\frac{1}{s}\mathbb{I}_d.
\label{eq:appB-Gk-sum}
\end{equation}
Conjugation by \(U_d\) gives
\begin{equation}
 \sum_{k=1}^{d-1}U_dG_kU_d^\dagger
 =
 sP_+-\frac{1}{s}\mathbb{I}_d.
\label{eq:appB-Fourier-Gk-sum}
\end{equation}
Adding Eqs.~\eqref{eq:appB-Gk-sum} and
\eqref{eq:appB-Fourier-Gk-sum} proves Eq.~\eqref{eq:G-explicit}.
Substitution into Eq.~\eqref{eq:Hk-definition} gives
Eq.~\eqref{eq:Hk-rank-one-decomposition} with the coefficients in
Eq.~\eqref{eq:auxiliary-parameters}.

In the basis \(\{\ket{0},\ket{k},\ket{w_k}\}\),
Eq.~\eqref{eq:plus-decomposition-first-basis} gives
\begin{equation}
 P_+\big|_{\mathcal{K}_k}
 =
 \frac{1}{d}
 \begin{pmatrix}
 1&1&\sqrt{d-2}\\
 1&1&\sqrt{d-2}\\
 \sqrt{d-2}&\sqrt{d-2}&d-2
 \end{pmatrix},
\label{eq:appB-Pplus-matrix}
\end{equation}
while
\begin{equation}
 P_0\big|_{\mathcal{K}_k}
 =\operatorname{diag}(1,0,0),
 \qquad
 P_k\big|_{\mathcal{K}_k}
 =\operatorname{diag}(0,1,0).
\label{eq:appB-P0-Pk-matrices}
\end{equation}
This proves Eq.~\eqref{eq:first-3x3-block}. If
\(\ket{x}\in\mathcal{K}_k^\perp\), all three projectors vanish on
\(\ket{x}\), which proves Eq.~\eqref{eq:Hk-complement-action}.

For the second representation,
Eqs.~\eqref{eq:zero-decomposition-second-basis} and
\eqref{eq:phi-decomposition-second-basis} give
\begin{equation}
 P_0\big|_{\mathcal{L}_k}
 =
 \begin{pmatrix}
 \frac1d&\frac{\sqrt{d-1}}{d}&0\\
 \frac{\sqrt{d-1}}{d}&\frac{d-1}{d}&0\\
 0&0&0
 \end{pmatrix},
\label{eq:appB-P0-second-matrix}
\end{equation}
\begin{equation}
 Q_k\big|_{\mathcal{L}_k}
 =
 \begin{pmatrix}
 0&0&0\\
 0&\frac1{d-1}&\frac{\sqrt{d-2}}{d-1}\\
 0&\frac{\sqrt{d-2}}{d-1}&\frac{d-2}{d-1}
 \end{pmatrix},
\label{eq:appB-Qk-matrix}
\end{equation}
\begin{equation}
 P_+\big|_{\mathcal{L}_k}
 =
 \operatorname{diag}(1,0,0).
\label{eq:appB-Pplus-second-matrix}
\end{equation}
Substitution into Eq.~\eqref{eq:Hdk-rank-one-decomposition} proves
Eq.~\eqref{eq:second-3x3-block}; the complement gives
Eq.~\eqref{eq:Hdk-complement-action}.

The similarity between the two active blocks can also be seen directly.
In addition to \(U_d\ket{0}=\ket{+}\) and
\(U_d\ket{k}=\ket{\phi_k}\), one has
\begin{equation}
 U_d\ket{w_k}
 =
 \frac{\sqrt{d}\ket{0}-\ket{+}-\ket{\phi_k}}{\sqrt{d-2}}
 \in\mathcal{L}_k.
\label{eq:appB-Uwk}
\end{equation}
Hence \(U_d\mathcal{K}_k=\mathcal{L}_k\). The restriction
\begin{equation}
 W_k
 :=
 U_d\big|_{\mathcal{K}_k}:
 \mathcal{K}_k\longrightarrow\mathcal{L}_k
\label{eq:appB-restricted-unitary}
\end{equation}
is unitary and intertwines the restrictions of the full operators:
\begin{equation}
 H_{d+k}\big|_{\mathcal{L}_k}
 =
 W_k
 \left(H_k\big|_{\mathcal{K}_k}\right)
 W_k^\dagger.
\label{eq:appB-block-similarity}
\end{equation}
In the ordered bases used in Eqs.~\eqref{eq:first-3x3-block} and
\eqref{eq:second-3x3-block}, the matrix of \(W_k\) is
\begin{equation}
 \Omega
 =
 \frac{1}{\sqrt{d-1}}
 \begin{pmatrix}
 \sqrt{d-1}&0&0\\
 0&1&\sqrt{d-2}\\
 0&\sqrt{d-2}&-1
 \end{pmatrix},
 \qquad \Omega^T\Omega=\mathbb{I}_3.
\label{eq:appB-Omega}
\end{equation}
Consequently,
\begin{equation}
 B^{(2)}=\Omega B^{(1)}\Omega^T,
\label{eq:appB-block-similarity-matrix}
\end{equation}
so the two blocks have the same characteristic polynomial and the same
eigenvalues. Their Gershgorin intervals may still differ because the row
sums of absolute values are not invariant under unitary similarity.

Finally,
\begin{equation}
 H_d
 =
 (1+r)
 \left[
 s(P_0+P_+)-\frac{2}{s}\mathbb{I}_d
 \right].
\label{eq:appB-Hd-explicit}
\end{equation}
In the basis \(\{\ket{0},\ket{s}\}\),
\begin{equation}
 P_+\big|_{\mathcal{S}}
 =
 \frac{1}{d}
 \begin{pmatrix}
 1&\sqrt{d-1}\\
 \sqrt{d-1}&d-1
 \end{pmatrix},
\label{eq:appB-Pplus-Hd}
\end{equation}
which leads to Eq.~\eqref{eq:Hd-2x2-block}. After removing the common
factor \((1+r)/\sqrt{d}\), the characteristic polynomial is
\begin{equation}
 p_d(\lambda)
 =
 \lambda^2-2(d-2)\lambda+(d-1)(d-4),
\label{eq:appB-Hd-polynomial}
\end{equation}
with roots
\begin{equation}
 d-2\pm\sqrt{d}.
\label{eq:appB-Hd-roots}
\end{equation}
On \(\mathcal{S}^\perp\), both \(P_0\) and \(P_+\) vanish, giving
Eq.~\eqref{eq:Hd-complement}.

\section{Gershgorin bounds and admissible values of $\tau$}
\label{app:Gershgorin}

\begin{Proposition}[Gershgorin theorem]
\label{prop:Gershgorin}
Let \(A=(a_{ij})\in M_n(\mathbb{C})\), and define
\begin{equation}
 R_i=\sum_{j\neq i}|a_{ij}|.
\label{eq:appC-Gershgorin-radius}
\end{equation}
Every eigenvalue of \(A\) belongs to
\begin{equation}
 \bigcup_{i=1}^{n}
 \{z\in\mathbb{C}:|z-a_{ii}|\leq R_i\}.
\label{eq:appC-Gershgorin-discs}
\end{equation}
If \(A\) is Hermitian, then
\begin{equation}
 \lambda_{\min}(A)\geq\min_i(a_{ii}-R_i),
 \qquad
 \lambda_{\max}(A)\leq\max_i(a_{ii}+R_i).
\label{eq:appC-Gershgorin-extrema}
\end{equation}
\end{Proposition}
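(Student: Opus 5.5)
We prove Proposition~\ref{prop:Gershgorin} in two steps: first the disc inclusion for an arbitrary complex matrix, then the Hermitian specialization \eqref{eq:appC-Gershgorin-extrema}.

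\emph{Disc inclusion.} Let \(\lambda\) be an eigenvalue of \(A\) with eigenvector \(x\neq0\), and choose an index \(i\) with \(|x_i|=\max_j|x_j|>0\). Reading off the \(i\)th row of \(Ax=\lambda x\) gives
\begin{equation*}
 (\lambda-a_{ii})x_i=\sum_{j\neq i}a_{ij}x_j.
\end{equation*}
Taking absolute values and using \(|x_j|\leq|x_i|\), then dividing by \(|x_i|\), we get \(|\lambda-a_{ii}|\leq\sum_{j\neq i}|a_{ij}|=R_i\). Thus \(\lambda\) lies in the \(i\)th disc, and therefore in the union \eqref{eq:appC-Gershgorin-discs}.

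\emph{Hermitian case.} If \(A\) is Hermitian, its diagonal entries \(a_{ii}\) and its eigenvalues are all real. The intersection of the disc \(\{|z-a_{ii}|\leq R_i\}\) with the real line is the interval \([a_{ii}-R_i,\,a_{ii}+R_i]\), so every eigenvalue lies in the union of these intervals. Every point of that union is at least \(\min_i(a_{ii}-R_i)\) and at most \(\max_i(a_{ii}+R_i)\). Applying this to \(\lambda_{\min}(A)\) and \(\lambda_{\max}(A)\) yields both inequalities in \eqref{eq:appC-Gershgorin-extrema}.

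The argument has no real obstacle. The only step that needs care is the choice of a maximal-modulus component \(x_i\): it guarantees \(x_i\neq0\), so the division is legitimate, and it is what makes the bound \(|x_j|/|x_i|\leq1\) hold for every \(j\). Note also that \(R_i\) is computed from row \(i\), which is why the disc selected is the \(i\)th one. For Hermitian \(A\) the row sums and column sums coincide, so the distinction is immaterial in the applications to \(B^{(1)}\) and \(B^{(2)}\).
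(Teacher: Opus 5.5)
Your proof is correct and complete. The maximal-modulus-component argument gives the disc inclusion, and restricting the discs to the real line gives the Hermitian bounds. The paper gives no proof of its own and only cites the classical references (Gershgorin, Varga, Horn--Johnson, Bhatia); your argument is the standard one found there.
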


The result is classical; see Refs.~\cite{Gershgorin1931,Varga,HornJohnson2013,Bhatia1997}.
For \(\lambda\in\spec(H_j)\), the corresponding eigenvalue of
\(E_j(\tau)\) is
\begin{equation}
 \mu=\frac{1}{2d-1}+\tau\lambda.
\label{eq:appC-mu}
\end{equation}
Since every \(H_j\) is traceless and nonzero, its spectrum contains both
positive and negative values. Hence
\begin{equation}
 E_j(\tau)\geq0
 \quad\Longleftrightarrow\quad
 -\frac{1}{(2d-1)\lambda_{\max}(H_j)}
 \leq\tau\leq
 -\frac{1}{(2d-1)\lambda_{\min}(H_j)}.
\label{eq:appC-one-operator-interval}
\end{equation}

\medskip
\noindent\textit{Lower spectral bounds.}\par
\smallskip

For \(B^{(1)}\), the lower endpoints are
\begin{align}
 A_1
 &=c+\beta-\frac{\sqrt{d-2}}{\sqrt{d}},
\label{eq:appC-A1}\\
 A_2
 &=c+\delta-\frac{\sqrt{d-2}}{\sqrt{d}},
\label{eq:appC-A2}\\
 A_3
 &=c+\frac{d-2-2\sqrt{d-2}}{\sqrt{d}}.
\label{eq:appC-A3}
\end{align}
Their differences satisfy
\begin{equation}
 A_2-A_1
 =
 \eta-\sqrt{d}+\frac{\eta}{\sqrt{d}+1}>0,
\label{eq:appC-A2-A1}
\end{equation}
\begin{equation}
 A_3-A_1
 =
 \frac{\eta}{\sqrt{d}+1}
 -
 \frac{2+\sqrt{d-2}}{\sqrt{d}}>0.
\label{eq:appC-A3-A1}
\end{equation}
For the second inequality, note that
\begin{equation}
 \eta>2d-1>d+\sqrt{d}=\sqrt{d}(\sqrt{d}+1)
\label{eq:appC-eta-basic-bound}
\end{equation}
for \(d\geq3\), whereas
\begin{equation}
 \sqrt{d}
 \geq
 \frac{2+\sqrt{d-2}}{\sqrt{d}}.
\label{eq:appC-elementary-lower-bound}
\end{equation}
Thus
\begin{equation}
 \lambda_{\min}\bigl(B^{(1)}\bigr)
 \geq
 m_d^{(1)}
 :=A_1
 =
 -\frac{d+1+\sqrt{d-2}+r}{\sqrt{d}}.
\label{eq:appC-md1-explicit}
\end{equation}

For \(B^{(2)}\), the lower endpoints are
\begin{align}
 B_1
 &=c+\beta+\frac{\gamma}{d}
 -\frac{\gamma\sqrt{d-1}}{d},
\label{eq:appC-B1}\\
 B_2
 &=c+\frac{\gamma(d-1)}{d}+\frac{\delta}{d-1}
 -\frac{\gamma\sqrt{d-1}}{d}
 -\frac{\delta\sqrt{d-2}}{d-1},
\label{eq:appC-B2}\\
 B_3
 &=c+\frac{\delta(d-2)}{d-1}
 -\frac{\delta\sqrt{d-2}}{d-1}.
\label{eq:appC-B3}
\end{align}
Put
\begin{equation}
 s=\sqrt{d},
 \qquad
 u=\sqrt{d-2},
 \qquad
 v=\sqrt{d-1}.
\label{eq:appC-suv}
\end{equation}
A direct simplification gives
\begin{equation}
 B_2-B_1
 =
 2\left[
 \frac{\eta}{(d-1)(s+u)}-\frac{1}{s}
 \right]>0,
\label{eq:appC-B2-B1}
\end{equation}
because
\begin{equation}
 \eta s>(2d-1)s>(d-1)(s+u).
\label{eq:appC-B2-estimate}
\end{equation}
Furthermore,
\begin{equation}
 B_3-B_1
 =
 \eta\left(
 \frac{d-2-u}{d-1}+\frac{1}{s+1}
 \right)
 -s-\frac{1-v}{s}>0.
\label{eq:appC-B3-B1}
\end{equation}
Indeed, the first fraction in parentheses is nonnegative,
\((v-1)/s\geq0\), and Eq.~\eqref{eq:appC-eta-basic-bound} implies
\(\eta/(s+1)>s\). Therefore
\begin{equation}
 \lambda_{\min}\bigl(B^{(2)}\bigr)
 \geq
 m_d^{(2)}
 :=B_1
 =
 -\frac{d+\sqrt{d-1}+r}{\sqrt{d}}.
\label{eq:appC-md2-explicit}
\end{equation}
The two lower bounds obey
\begin{equation}
 m_d^{(2)}-m_d^{(1)}
 =
 \frac{1+\sqrt{d-2}-\sqrt{d-1}}{\sqrt{d}}>0.
\label{eq:appC-minimum-comparison}
\end{equation}
Since the blocks are isospectral, the stronger conclusion is
\begin{equation}
 \lambda_{\min}(H_k)
 =
 \lambda_{\min}(H_{d+k})
 \geq m_d^{(2)}.
\label{eq:appC-combined-minimum}
\end{equation}

It remains to compare this bound with the scalar eigenvalue and with
\(H_d\). Since
\begin{equation}
 \frac{\eta}{s+1}<2r
\label{eq:appC-eta-upper-bound}
\end{equation}
(the inequality follows from \(1+r<2(s+1)\)), we have
\begin{equation}
 c-m_d^{(2)}
 >
 \frac{d-2+\sqrt{d-1}-r}{\sqrt{d}}>0.
\label{eq:appC-c-min-comparison}
\end{equation}
For \(d\geq5\),
\begin{equation}
 d-2+\sqrt{d-1}>d-1\geq r,
\label{eq:appC-basic-r-comparison}
\end{equation}
where the last inequality follows from
\((d-1)^2\geq2d-1\). The same estimate gives
\begin{equation}
 \lambda_{\min}(H_d)-m_d^{(2)}
 =
 \frac{d-2+\sqrt{d-1}-r}{\sqrt{d}}>0.
\label{eq:appC-Hd-min-comparison}
\end{equation}
Thus \(m_d^{(2)}\) is a uniform lower bound for every \(H_j\).

\medskip
\noindent\textit{Upper spectral bounds.}\par
\smallskip

For \(B^{(1)}\), the upper endpoints are
\begin{align}
 U_1
 &=c+\beta+\frac{2+\sqrt{d-2}}{\sqrt{d}},
\label{eq:appC-U1}\\
 U_2
 &=c+\delta+\frac{2+\sqrt{d-2}}{\sqrt{d}},
\label{eq:appC-U2}\\
 U_3
 &=c+\frac{d-2+2\sqrt{d-2}}{\sqrt{d}}.
\label{eq:appC-U3}
\end{align}
We have
\begin{equation}
 U_2-U_1
 =
 \eta-\sqrt{d}+\frac{\eta}{\sqrt{d}+1}>0,
\label{eq:appC-U2-U1}
\end{equation}
\begin{equation}
 U_2-U_3
 =
 \eta-\frac{d-4+\sqrt{d-2}}{\sqrt{d}}>0.
\label{eq:appC-U2-U3}
\end{equation}
Hence
\begin{equation}
 \lambda_{\max}\bigl(B^{(1)}\bigr)
 \leq
 M_d^{(1)}
 :=U_2
 =
 c+\eta+\frac{2+\sqrt{d-2}}{\sqrt{d}}.
\label{eq:appC-Md1}
\end{equation}

For \(B^{(2)}\), the upper endpoints are
\begin{align}
 V_1
 &=c+\beta+\frac{\gamma}{d}
 +\frac{\gamma\sqrt{d-1}}{d},
\label{eq:appC-V1}\\
 V_2
 &=c+\frac{\gamma(d-1)}{d}+\frac{\delta}{d-1}
 +\frac{\gamma\sqrt{d-1}}{d}
 +\frac{\delta\sqrt{d-2}}{d-1},
\label{eq:appC-V2}\\
 V_3
 &=c+\frac{\delta(d-2)}{d-1}
 +\frac{\delta\sqrt{d-2}}{d-1}.
\label{eq:appC-V3}
\end{align}
Direct simplification gives
\begin{equation}
 V_2-V_1
 =
 -\frac{2}{\sqrt{d}}
 +
 \eta\left(
 \frac{1+\sqrt{d-2}}{d-1}
 +\frac{1}{\sqrt{d}+1}
 \right)>0.
\label{eq:appC-V2-V1}
\end{equation}
Indeed, Eq.~\eqref{eq:appC-eta-basic-bound} implies
\begin{equation}
 \frac{\eta}{\sqrt{d}+1}>\sqrt{d}>\frac{2}{\sqrt{d}},
\label{eq:appC-V2-V1-positive}
\end{equation}
while the remaining term in parentheses in
Eq.~\eqref{eq:appC-V2-V1} is nonnegative. Moreover,
\begin{equation}
 V_2-V_3
 =
 \frac{d-1+\sqrt{d-1}}{\sqrt{d}}
 -
 \eta\frac{d-3}{d-1}.
\label{eq:appC-V2-V3}
\end{equation}
For \(d\geq5\),
\begin{equation}
 \eta\frac{d-3}{d-1}
 >2(d-3)
 >\sqrt{d}+1
 >\frac{d-1+\sqrt{d-1}}{\sqrt{d}},
\label{eq:appC-V3-estimate}
\end{equation}
so \(V_3>V_2>V_1\) and
\begin{equation}
 \lambda_{\max}\bigl(B^{(2)}\bigr)
 \leq
 M_d^{(2)}
 :=V_3
 =
 c+\eta\frac{d-2+\sqrt{d-2}}{d-1}.
\label{eq:appC-Md2}
\end{equation}

The difference between the two upper bounds is
\begin{equation}
 M_d^{(2)}-M_d^{(1)}
 =
 \eta\frac{u-1}{d-1}-\frac{u+2}{s}.
\label{eq:appC-M2-M1}
\end{equation}
For \(d\geq7\), we use \(\eta>2(d-1)\), \(u>2\), and \(s>2\) to obtain
\begin{align}
 M_d^{(2)}-M_d^{(1)}
 &>
 2(u-1)-\frac{u+2}{s}
 \nonumber\\
 &>
 2(u-1)-\frac{u+2}{2}
 =\frac{3}{2}(u-2)>0.
\label{eq:appC-M2-M1-large-d}
\end{align}
For the remaining admissible value \(d=5\), direct substitution gives
\begin{equation}
 M_5^{(2)}-M_5^{(1)}
 =
 3(\sqrt{3}-1)-\frac{\sqrt{3}+2}{\sqrt{5}}>0.
\label{eq:appC-M2-M1-d5}
\end{equation}
Therefore
\begin{equation}
 M_d^{(1)}<M_d^{(2)}
\label{eq:appC-upper-order}
\end{equation}
for every prime \(d\geq5\),
and isospectrality yields
\begin{equation}
 \lambda_{\max}(H_k)
 =
 \lambda_{\max}(H_{d+k})
 \leq M_d^{(1)}.
\label{eq:appC-combined-maximum}
\end{equation}
Clearly \(M_d^{(1)}>c\). It remains to compare it with
\(\lambda_{\max}(H_d)\). With \(s=\sqrt{d}\) and
\(u=\sqrt{d-2}\),
\begin{align}
 M_d^{(1)}-\lambda_{\max}(H_d)
 &=
 (1+r)
 \left[
 r\left(1-\frac{1}{s(s+1)}\right)
 -s-1+\frac{2}{s}
 \right]
 +\frac{u}{s}.
\label{eq:appC-M1-Hd-difference}
\end{align}
The last term is positive. The expression in square brackets is positive
if
\begin{equation}
 r(s^2+s-1)>(s+1)(s^2+s-2).
\label{eq:appC-M1-Hd-core}
\end{equation}
Both sides are positive for \(s\geq\sqrt{5}\). After squaring, their
difference is
\begin{align}
 &(2s^2-1)(s^2+s-1)^2
 -(s+1)^2(s^2+s-2)^2
\nonumber\\
 &\qquad=
 s^6-5s^4+2s^3+10s^2-2s-5.
\label{eq:appC-M1-Hd-polynomial}
\end{align}
For \(s^2\geq5\),
\begin{equation}
 s^6-5s^4=s^4(s^2-5)\geq0,
\label{eq:appC-M1-Hd-first-part}
\end{equation}
and
\begin{equation}
 2s^3+10s^2-2s-5
 =2s(s^2-1)+5(2s^2-1)>0.
\label{eq:appC-M1-Hd-second-part}
\end{equation}
Thus Eq.~\eqref{eq:appC-M1-Hd-core} holds, and
\begin{equation}
 M_d^{(1)}>\lambda_{\max}(H_d).
\label{eq:appC-M1-greater-Hd}
\end{equation}
Consequently, \(M_d^{(1)}\) is a uniform upper bound for every \(H_j\).

Combining Eqs.~\eqref{eq:appC-combined-minimum},
\eqref{eq:appC-Hd-min-comparison},
\eqref{eq:appC-combined-maximum}, and
\eqref{eq:appC-M1-greater-Hd} gives
\begin{equation}
 m_d^{(2)}
 \leq
 \lambda_{\min}(H_j)
 \leq
 \lambda_{\max}(H_j)
 \leq
 M_d^{(1)}
\label{eq:appC-final-uniform-bounds}
\end{equation}
for all \(j=1,\ldots,2d-1\). Substitution into
Eq.~\eqref{eq:appC-one-operator-interval} proves
Theorem~\ref{thm:tau-bound}.

\twocolumngrid
\bibliographystyle{apsrev4-2}
\bibliography{biblio_steering}

\end{document}